\newcommand{\real}{\textrm{Re}\,}
\newcommand{\imag}{\textrm{Im}\,}
\newcommand{\parti}[2]{\frac{\partial #1}{\partial #2}}
\newcommand{\intall}{\int_{-\infty}^{\infty}}
\newcommand{\ket}[1]{|#1\rangle}
\newcommand{\bra}[1]{\langle#1|}
\newcommand{\avg}[1]{\langle#1\rangle}
\newcommand{\Avg}[1]{\left\langle#1\right\rangle}
\newcommand{\abs}[1]{\left|#1\right|}
\newcommand{\bk}[1]{\left(#1\right)}
\newcommand{\Bk}[1]{\left[#1\right]}
\newcommand{\BK}[1]{\left\{#1\right\}}
\newcommand{\trace}{\textrm{tr}}
\newtheorem{lemma}{Lemma}
\newtheorem{theorem}{Theorem}
\newtheorem{defin}{Definition}
\newtheorem{cor}{Corollary}
\theoremstyle{remark}
\begin{document}
\title{Quantum metrology with open dynamical systems}

\author{Mankei Tsang}

\ead{eletmk@nus.edu.sg}
\address{Department of Electrical and Computer Engineering,
  National University of Singapore, 4 Engineering Drive 3, Singapore
  117583}

\address{Department of Physics, National University of Singapore,
  2 Science Drive 3, Singapore 117551}











\date{\today}

\begin{abstract}
  This paper studies quantum limits to dynamical sensors in the
  presence of decoherence.  A modified purification approach is used
  to obtain tighter quantum detection and estimation error bounds for
  optical phase sensing and optomechanical force sensing. When optical
  loss is present, these bounds are found to obey shot-noise scalings
  for arbitrary quantum states of light under certain realistic
  conditions, thus ruling out the possibility of asymptotic Heisenberg
  error scalings with respect to the average photon flux under those
  conditions. The proposed bounds are expected to be approachable
  using current quantum optics technology.
\end{abstract}

\maketitle
\section{Introduction}
The laws of quantum mechanics impose fundamental limitations to the
accuracy of measurements, and a fundamental question in quantum
measurement theory is how such limitations affect precision sensing
applications, such as gravitational-wave detection, optical
interferometry, and atomic magnetometry and gyroscopy
\cite{braginsky,glm_science}. With the rapid recent advance in quantum
optomechanics \cite{kippenberg,aspelmeyer,ligo2011,schnabel,chen2013}
and atomic \cite{chu,budker} technologies, quantum sensing limits have
received renewed interest and are expected to play a key role in
future precision measurement applications.

Many realistic sensors, such as gravitational-wave detectors, perform
continuous measurements of time-varying signals (commonly called
waveforms). For such sensors, a quantum Cram\'er-Rao bound (QCRB) for
waveform estimation \cite{twc} and a quantum fidelity bound for
waveform detection \cite{tsang_nair} have recently been proved,
generalizing earlier seminal results by Helstrom
\cite{helstrom}. These bounds are not expected to be tight when
decoherence is significant, however, as \cite{twc,tsang_nair}
use a purification approach that does not account for the
inaccessibility of the environment. Given the ubiquity of decoherence
in quantum experiments, the relevance of the bounds to practical
situations may be questioned.

One way to account for decoherence is to employ the concepts of mixed
states, effects, and operations \cite{kraus}. Such an approach has
been successful in the study of single-parameter estimation problems
\cite{fujiwara,kolodynski,knysh,demkowicz,escher,escher_bjp,escher_prl,kolodynski2013},
but becomes intractable for nontrivial quantum dynamics. To retain the
convenience of a pure Hilbert space, here I extend a modified
purification approach proposed in
\cite{fujiwara,escher,escher_bjp,escher_prl} and apply it to
more general open-system detection and estimation problems beyond the
paradigm of single-parameter estimation considered by previous work
\cite{fujiwara,kolodynski,knysh,demkowicz,escher,escher_bjp,escher_prl,kolodynski2013}.
In particular, I show that
\begin{enumerate}
\item For optical phase detection with loss and vacuum noise, the
  errors obey lower bounds that scale with the average photon number
  akin to reduced shot-noise limits, provided that the phase shift or
  the quantum efficiency is small enough (the precise conditions will
  be given later).  This rules out Heisenberg scaling of the
  detectable phase shift \cite{ou,paris1997} in the high-number limit
  under such conditions, as well as any significant enhancement of the
  error exponent by quantum illumination \cite{lloyd,tan,pirandola} in
  the low-efficiency limit with vacuum noise. Similar results exist
  when the phase is a waveform.
 
\item The mean-square error for lossy optical phase waveform
  estimation also observes a limit with shot-noise scaling, which
  generalizes the single-parameter results in
  \cite{kolodynski,knysh,demkowicz,escher,escher_bjp,escher_prl}
  and rules out the kind of quantum-enhanced scalings suggested by
  \cite{berry2002,berry2006,tsl2008,tsl2009} in the high-flux
  limit.
\item A quantum model of optomechanical force sensing can be
  transformed to an optical phase sensing problem with classical phase
  shift, such that a unified formalism can treat both problems and
  produce tighter bounds than the results in
  \cite{twc,tsang_nair}.
\end{enumerate}
These results not only provide more general and realistic quantum
limits that can be approached using current quantum optics technology
\cite{cook,wittmann,tsujino,wheatley,yonezawa,iwasawa}, but may also be
relevant to more general studies of quantum metrology and quantum
information, such as quantum speed limits \cite{taddei,delcampo} and
Loschmidt echo \cite{gorin}.

\section{\label{purify}The modified purification approach}
Let $x$ be the vector of unknown parameters to be estimated and $y$ be
the observation. Within the purification approach
\cite{twc,tsang_nair}, the dynamics of a quantum sensor is modeled by
unitary evolution ($U_x$ as a function of $x$) of an initial pure
density operator $\rho = \ket{\Psi}\bra{\Psi}$, and measurements are
modeled by a final-time positive operator-valued measure (POVM) $E(y)$
using the principle of deferred measurement \cite{nielsen}.  The
likelihood function becomes 
\begin{eqnarray}
P(y|x) = \trace\Bk{E(y) U_x \rho U_x^\dagger}.
\end{eqnarray}
For continous-time problems, discrete time is first assumed and the
continuous limit is taken at the end of
calculations. \cite{twc,tsang_nair} derive quantum bounds by
considering the density operator $U_x \rho U_x^\dagger$.

\begin{figure}[htbp]
\centerline{\includegraphics[width=0.6\textwidth]{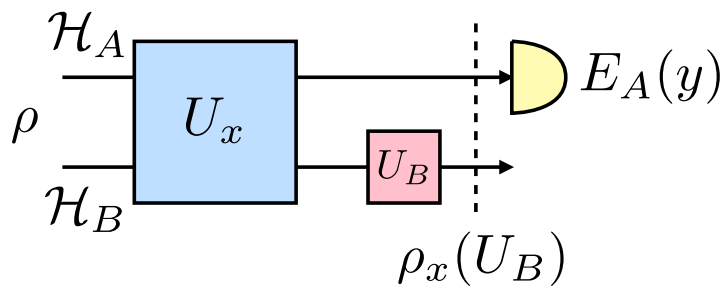}}
\caption{Quantum circuit diagram \cite{nielsen} for the modified
  purification approach.}
\label{naimark}
\end{figure}

Suppose that the Hilbert space ($\mathcal H = \mathcal H_A\otimes
\mathcal H_B$) is divided into an accessible part ($\mathcal H_A$) and
an inaccessible part ($\mathcal H_B$). The POVM should now be written
as $E(y) = E_A(y)\otimes 1_B$, where $E_A(y)$ is a POVM on $\mathcal
H_A$ and $1_B$ is the identity operator on $\mathcal H_B$, which
accounts for the fact that $\mathcal H_B$ cannot be measured.  The key
to the modified purification approach, as illustrated by
Fig.~\ref{naimark}, is to recognize that the likelihood function is
unchanged if any arbitrary $x$-dependent unitary $U_B$ on $\mathcal
H_B$ is applied before the POVM:
\begin{eqnarray}
P(y|x) = \trace\BK{\Bk{E_A(y)\otimes 1_B } \rho_x(U_B)},
\label{rhoxUB}
\end{eqnarray}
where
\begin{eqnarray}
\rho_x(U_B) \equiv (1_A\otimes U_B^\dagger) U_x \rho U_x^\dagger (1_A\otimes U_B)
\label{rhox}
\end{eqnarray}
is a purification of $\trace_B(U_x\rho U_x^\dagger)$, such that
$\trace_B \rho_x = \trace_B(U_x\rho U_x^\dagger)$.  Judicious choices
of $U_B$ can result in tighter quantum bounds as a function of
$\rho_x(U_B)$ \cite{fujiwara,escher,escher_bjp,escher_prl}.

First, suppose that $x^{(0)}$ and $x^{(1)}$ are the two
hypotheses for $x$ and $\tilde x(y)$ is the estimate. The following
theorems are applications of the modified purification and Helstrom's
bounds for pure states \cite{helstrom}:
\begin{theorem}[Fidelity bound, Neyman-Pearson criterion]
\label{np}
For any POVM measurement $E_A(y)$ of $\trace_B \rho_x$ in $\mathcal
H_A$, the miss probability, defined as
\begin{eqnarray}
P_{01}\equiv \int_{\tilde x(y) = x^{(0)}} dy P(y|x^{(1)}),
\end{eqnarray}
given a constraint on the false-alarm probability 
\begin{eqnarray}
P_{10} \equiv
\int_{\tilde x(y) = x^{(1)}}dy P(y|x^{(0)})\le \alpha,
\end{eqnarray}
satisfies
\begin{eqnarray}
P_{01} \ge \beta(\alpha,F)
\equiv 
\bigg\{\begin{array}{ll}
1-\Bk{\sqrt{\alpha F} + \sqrt{(1-\alpha)(1-F)}}^2,
&\alpha < F,
\\
0,& \alpha \ge F,
\end{array}
\label{beta}
\end{eqnarray}
where $F$ is the fidelity between the following pure states in
$\mathcal H_A\otimes \mathcal H_B$:
\begin{eqnarray}
\rho_0 \equiv U_B^\dagger U_0\ket{\Psi}\bra{\Psi}U_0^\dagger U_B,
\\
\rho_1 \equiv  U_1\ket{\Psi}\bra{\Psi}U_1^\dagger,
\\
F(\rho_0,\rho_1) \equiv 
\abs{\bra{\Psi} U_{1}^\dagger U_B^\dagger U_0 \ket{\Psi}}^2,
\label{pure_fidelity}
\end{eqnarray}
$\rho_m \equiv \rho_{x^{(m)}}$, $U_m \equiv U_{x^{(m)}}$, and
$1_A\otimes U_B$, abbreviated as $U_B$, is an arbitrary unitary on
$\mathcal H_B$.
\end{theorem}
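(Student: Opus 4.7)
The plan is to reduce the claim to Helstrom's Neyman--Pearson fidelity bound for two pure states on the composite Hilbert space $\mathcal H_A \otimes \mathcal H_B$. The starting ingredient is already the content of Eqs.~(\ref{rhoxUB})--(\ref{rhox}): inserting an arbitrary $x$-dependent unitary $U_B$ on the inaccessible sector before the POVM leaves every likelihood $P(y|x)$ unchanged. Hence the binary hypothesis test in the statement is statistically indistinguishable from a test between the pure joint states $\rho_0$ and $\rho_1$ defined in the theorem, performed with a POVM of the restricted tensor-product form $E_A(y)\otimes 1_B$. Since both $\rho_0$ and $\rho_1$ are pure, the Uhlmann fidelity collapses to the squared overlap $\abs{\bra{\Psi}U_{1}^\dagger U_B^\dagger U_0\ket{\Psi}}^2$, which is exactly Eq.~(\ref{pure_fidelity}).

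Next, I would invoke Helstrom's pure-state Neyman--Pearson bound: for \emph{any} POVM on \emph{any} Hilbert space discriminating two pure states with fidelity $F$, the miss probability obeys $P_{01}\ge\beta(\alpha,F)$ whenever $P_{10}\le\alpha$, with $\beta$ given by Eq.~(\ref{beta}). The standard derivation diagonalises the two-dimensional real subspace spanned by the two kets, parameterises its projective measurements by a single angle, and optimises by the classical Neyman--Pearson lemma; I would cite Helstrom's monograph rather than reproduce the calculation, noting in passing that $\beta$ admits the compact form $[\sqrt{F(1-\alpha)}-\sqrt{\alpha(1-F)}]^2$ for $\alpha<F$. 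Because the inequality is established for arbitrary POVMs on $\mathcal H_A\otimes\mathcal H_B$, and the tensor-product measurements $E_A(y)\otimes 1_B$ form a strict subclass, the bound transfers without modification: restricting the allowed measurements can only enlarge the minimum achievable $P_{01}$ at fixed $\alpha$.

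The proof is thus essentially a bookkeeping exercise once the modified purification identity is in hand, and there is no genuine technical obstacle. The main conceptual point worth underlining is that the $U_B$ freedom does not enter the theorem statement in a nontrivial way---the bound is asserted for \emph{each} choice of $U_B$---but will be exploited downstream: one minimises the overlap over admissible $U_B$ to obtain a pure-state fidelity that is strictly larger (and hence a bound on $P_{01}$ that is strictly tighter) than what one would get from either the naive purification of \cite{tsang_nair} or the mixed-state fidelity between the reduced states $\trace_B\rho_x$. I would therefore state the theorem with $U_B$ free, verify the equivalence of likelihoods, quote Helstrom's two-pure-state bound, and leave the optimisation over $U_B$ to the concrete phase- and force-sensing applications that follow.
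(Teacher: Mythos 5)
Your proof is correct and takes essentially the same route as the paper's own (one-line) argument: the modified-purification identity (\ref{rhoxUB}) makes the likelihoods those of a test between the pure states $\rho_0$ and $\rho_1$, Helstrom's Neyman--Pearson bound holds for \emph{every} POVM on $\mathcal H_A\otimes\mathcal H_B$, and $E_A(y)\otimes 1_B$ is such a POVM, so the bound transfers to the restricted class. One slip in your closing commentary (harmless to the proof, but worth fixing): since $\beta(\alpha,F)$ is increasing in $F$, one \emph{maximises} the overlap over $U_B$ to tighten the bound, and by Uhlmann's theorem this maximum \emph{equals} --- rather than exceeds --- the mixed-state fidelity between $\trace_B\rho_0$ and $\trace_B\rho_1$; monotonicity of fidelity under partial trace forbids the purified $F$ from ever being larger, which is exactly what the paper notes in the remark following Theorem \ref{Pe}.
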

\begin{theorem}[Fidelity bound, Bayes criterion]
\label{Pe}
The average error probability $P_e\equiv P_{10}P_0+P_{01}P_1$ with
prior probabilities $P_0$ and $P_1=1-P_0$ satisfies
\begin{eqnarray}
P_e \ge \frac{1}{2}\bk{1-\sqrt{1-4P_0P_1 F}} \ge P_0P_1 F.
\label{helstrom}
\end{eqnarray}
\end{theorem}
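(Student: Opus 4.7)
The plan is to reduce the stated bound to Helstrom's pure-state Bayes formula by exploiting the key observation that, thanks to the modified purification in Eq.~(\ref{rhox}), the hypotheses $\rho_0$ and $\rho_1$ are \emph{pure} states on the enlarged Hilbert space $\mathcal H_A\otimes \mathcal H_B$, while the physical measurement is represented by a legitimate POVM $\{E_A(y)\otimes 1_B\}$ on that same enlarged space. Given this, the problem becomes identical in form to Bayesian discrimination between two pure states, for which Helstrom's bound is sharp and known in closed form.

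First I would note that by Eq.~(\ref{rhoxUB}), for each hypothesis $m=0,1$ the likelihood satisfies $P(y|x^{(m)})=\trace\{[E_A(y)\otimes 1_B]\rho_m\}$ with $\rho_m$ the pure state given in the theorem statement, so the Bayes error probability $P_e=P_0P_{10}+P_1P_{01}$ of the physical measurement is identical to the error probability of the POVM $\{E_A(y)\otimes 1_B\}$ discriminating $\rho_0$ from $\rho_1$ with priors $P_0,P_1$. Second, I would invoke Helstrom's minimum-error theorem \cite{helstrom}, $P_e\ge \half\bk{1-\|P_1\rho_1-P_0\rho_0\|_1}$, and, for the pure-state case, evaluate the trace norm in closed form by diagonalizing the rank-two Hermitian operator $P_1\rho_1-P_0\rho_0$ inside the two-dimensional subspace it supports:
\[
\|P_1\rho_1-P_0\rho_0\|_1 = \sqrt{(P_0+P_1)^2-4P_0P_1 F(\rho_0,\rho_1)} = \sqrt{1-4P_0P_1 F}.
\]
Substituting into Helstrom's bound gives the first inequality of the theorem. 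No optimization over $U_B$ is needed at this stage: the bound holds for every unitary $U_B$ on $\mathcal H_B$, and in applications one would later pick $U_B$ to minimize $F$ and so tighten the result.

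The chained inequality $\half\bk{1-\sqrt{1-4P_0P_1 F}}\ge P_0P_1 F$ is elementary: setting $u=4P_0P_1 F\in[0,1]$ reduces it to $\sqrt{1-u}\le 1-u/2$, which squares to $0\le u^2/4$. I do not anticipate any real obstacle. The purity of $\rho_m$ on $\mathcal H_A\otimes \mathcal H_B$ is built into Eq.~(\ref{rhox}), the operator $E_A(y)\otimes 1_B$ is automatically a POVM on the enlarged space, and the remainder is a textbook application of Helstrom's pure-state formula plus a one-line algebraic inequality. The only mildly nontrivial step is the closed-form evaluation of the trace norm in terms of the fidelity, but this is a standard computation on the $2\times 2$ subspace supporting $P_1\rho_1-P_0\rho_0$.
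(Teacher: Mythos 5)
Your proof is correct and takes essentially the same route as the paper: the key step in both is that $E_A(y)\otimes 1_B$ is a legitimate POVM on the purified space $\mathcal H_A\otimes\mathcal H_B$, so Helstrom's pure-state Bayes bound applies directly, and your closed-form evaluation $\|P_1\rho_1-P_0\rho_0\|_1=\sqrt{1-4P_0P_1F}$ together with the elementary chained inequality merely unpacks what the paper delegates to the citation of Helstrom. One slip in your closing remark: since the bound is increasing in $F$, you should choose $U_B$ to \emph{maximize} $F$ (yielding the Uhlmann fidelity of the reduced states), not minimize it, in order to tighten the lower bound on $P_e$.
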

\begin{proof}[Proof of Theorem \ref{np} and \ref{Pe}]
  \cite{helstrom} shows that the bounds with the likelihood
  function $P(y|x)=\trace[E(y)\rho_x]$ are valid for any POVM $E(y)$
  on $\mathcal H_A\otimes\mathcal H_B$, so they must also be valid
  with $P(y|x) = \trace_A [E_A(y)\trace_B\rho_x] =
  \trace\{[E_A(y)\otimes 1_B]\rho_x\}$ for any POVM $E_A(y)$ on
  $\mathcal H_A$.
\end{proof}

  Since the lower bounds are valid for any $U_B$, $U_B$ should be
  chosen to increase $F$ and tighten the bounds. The maximum $F$
  becomes the Uhlmann fidelity between mixed states $\trace_B \rho_0$
  and $\trace_B \rho_1$ \cite{nielsen,wilde}. The bound on $P_e$
  obtained using this method is thus weaker than the Helstrom bound
  for the mixed states \cite{helstrom}, although it can be shown that
  the error exponent for the Uhlmann-fidelity bound is within 3dB of
  the optimal value \cite{audenaert2008}. 


Next, consider the estimation of continuous parameters $x$ with prior
distribution $P(x)$. A lower error bound is given by the following:
\begin{theorem}[Bayesian quantum Cram\'er-Rao bound]
  The error covariance matrix 
\begin{eqnarray}
\Sigma\equiv \mathbb E \bk{\tilde
    x-x}\bk{\tilde x-x}^\top = \int dy dx P(y|x)P(x)\bk{\tilde
    x-x}\bk{\tilde x-x}^\top
\end{eqnarray}
satisfies a matrix inequality given by
\begin{eqnarray}
\Sigma \ge \bk{J^{(Q)}+J^{(C)}}^{-1},
\label{matrix_qcrb}
\end{eqnarray}
where
\begin{eqnarray}
J_{jk}^{(Q)} &= -2\int dx P(x) 
 \parti{^2F(\rho_{x},\rho_{x'})}{x_j'\partial x_k'} \Bigg|_{x'=x},
\label{J_F}
\\
J_{jk}^{(C)} &= \int dx P(x) \parti{\ln P(x)}{x_j}\parti{\ln P(x)}{x_k}.
\end{eqnarray}
\end{theorem}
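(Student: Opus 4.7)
The plan is to combine the classical Bayesian Cram\'er--Rao (van Trees) inequality for the likelihood $P(y|x) = \trace\{[E_A(y)\otimes 1_B]\rho_x(U_B)\}$ with the multiparameter quantum information inequality, and then use the pure-state identification of the quantum Fisher information (QFI) with the Hessian of the fidelity at coincidence. The purification $\rho_x(U_B)$ in~(\ref{rhox}) is pure by construction, which makes this last identification clean.

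First I would invoke the van Trees inequality under standard regularity conditions (differentiability of $P(x)$ and $P(y|x)$ and vanishing of $P(x)$ at the boundary of its support): for any estimator $\tilde x(y)$,
\begin{eqnarray*}
\Sigma \ge \bk{\bar J^{(F)} + J^{(C)}}^{-1}
\end{eqnarray*}
in the positive semidefinite ordering, where $\bar J^{(F)}_{jk} \equiv \int dx\, P(x) \int dy\, P(y|x)\, \partial_j \ln P(y|x)\, \partial_k \ln P(y|x)$ is the prior-averaged classical Fisher information matrix. Second, by the multiparameter quantum information inequality~\cite{helstrom}, for any POVM on $\mathcal H_A\otimes\mathcal H_B$ the classical Fisher matrix is bounded above by the QFI matrix $J^{(QFI)}(x)$ of $\rho_x(U_B)$, so that $\bar J^{(F)} \le \int dx\, P(x)\, J^{(QFI)}(x)$. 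Third, since $\rho_x(U_B)$ is pure with purification vector $(1_A\otimes U_B^\dagger) U_x \ket{\Psi}$, a short Taylor expansion using the fact that $\langle\psi|\partial_j\psi\rangle$ is purely imaginary yields
\begin{eqnarray*}
J^{(QFI)}_{jk}(x) = -2\,\parti{^2 F(\rho_x,\rho_{x'})}{x_j'\partial x_k'}\bigg|_{x'=x},
\end{eqnarray*}
so that $\int dx\, P(x)\, J^{(QFI)}(x)$ agrees with $J^{(Q)}$ as defined in~(\ref{J_F}). Chaining the three inequalities and using monotonicity of the matrix inverse on positive definite matrices gives the claimed bound.

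The main obstacle is twofold. On the classical side, van Trees requires $P(x)$ to be sufficiently smooth and to vanish at the boundary of its support so that the integration-by-parts that produces $J^{(C)}$ picks up no boundary contribution; these are the implicit regularity assumptions of the theorem. On the quantum side, care is needed because the auxiliary unitary $U_B$ may itself depend on $x$, so that $\rho_x(U_B)$ carries additional $x$-dependence beyond that of $U_x\rho U_x^\dagger$; the derivatives in the fidelity Hessian must be computed with the full $x$-dependence of $U_B$ included, and the resulting inequality is valid separately for every choice of $U_B$, leaving the freedom to optimize $U_B$ afterwards in order to tighten the bound.
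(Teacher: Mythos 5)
Your proposal is correct, and it takes a more self-contained route than the paper, whose proof of the matrix inequality (\ref{matrix_qcrb}) consists of a citation to \cite{twc}; the paper's own effort goes entirely into the identity (\ref{J_F}), which it obtains from the general mixed-state relation between the Bures distance and the QFI matrix, $2[1-\sqrt{F(\rho_x,\rho_{x+dx})}]=\frac{1}{4}\sum_{j,k}\mathcal J^{(Q)}_{jk}(x)\,dx_j\,dx_k$ \cite{hayashi,paris}, followed by averaging $\mathcal J^{(Q)}(x)$ over the prior. You instead assemble the bound from three standard ingredients --- the multivariate van Trees inequality, the Braunstein--Caves bound $J^{(F)}(x)\le J^{(QFI)}(x)$ applied pointwise under the prior (valid because the POVM $E_A(y)\otimes 1_B$ is $x$-independent), and order reversal of the matrix inverse on positive definite matrices (what you call monotonicity, correctly used as $A\le B\Rightarrow B^{-1}\le A^{-1}$) --- and you replace the Bures-metric identity by a direct second-order expansion of $\abs{\braket{\psi_x}{\psi_{x'}}}^2$ using $\real\braket{\psi}{\partial_j\psi}=0$, which suffices here precisely because the purification $\rho_x(U_B)$ of (\ref{rhox}) is pure. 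The trade-off: the paper's route is shorter, and its identity (\ref{J_F}) is stated via the Bures metric so it holds for mixed $\rho_x$ without a purity assumption; your route makes explicit the regularity assumptions the paper leaves implicit (smoothness and boundary decay of $P(x)$ for the integration by parts, invertibility of $\bar J^{(F)}+J^{(C)}$ and of $J^{(Q)}+J^{(C)}$ for the chaining step). Your closing caution --- that when $U_B$ depends on $x$ the fidelity Hessian must be computed with the full $x$-dependence of $U_B$ included, the bound holding separately for each fixed choice of the map $x\mapsto U_B$ and only then optimized --- is exactly how the paper proceeds in practice in \ref{qfi}, where $\theta_j=\sum_k\lambda_{jk}\phi_k$ enters the QFI computation and the result is subsequently minimized over $\lambda$, so that remark is a genuine clarification rather than an obstacle.
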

\begin{proof}
  See \cite{twc} for a proof of (\ref{matrix_qcrb}). To
  relate $J^{(Q)}$ and $F$ as in (\ref{J_F}), first note the
  identity that relates the Bures distance $D_B^2(\rho_x,\rho_{x+dx})$
  between two density matrices separated by an infinitesimal 
  parameter change to the quantum Fisher
  information (QFI) matrix $\mathcal J^{(Q)}(x)$ \cite{hayashi,paris}:
\begin{eqnarray}
D_B^2(\rho_x,\rho_{x+dx}) = 2\Bk{1-\sqrt{F(\rho_x,\rho_{x+dx})}} = \frac{1}{4} 
\sum_{j,k} \mathcal J_{jk}^{(Q)}(x) dx_j dx_k.
\end{eqnarray}
This allows one to write the fidelity as
\begin{eqnarray}
F(\rho_x,\rho_{x+dx}) = 1-\frac{1}{4} \sum_{j,k} \mathcal J_{jk}^{(Q)}(x) dx_j dx_k,
\end{eqnarray}
and $\mathcal J^{(Q)}(x)$ as
\begin{eqnarray}
\mathcal J_{jk}^{(Q)}(x) = -2\parti{^2F(\rho_{x},\rho_{x'})}{x_j'\partial x_k'} \Bigg|_{x'=x}.
\end{eqnarray}
According to \cite{twc}, the Bayes QFI $J^{(Q)}$ is the average
of $\mathcal J^{(Q)}(x)$ over the prior probability distribution
$P(x)$. (\ref{J_F}) then follows.
\end{proof}
Here I focus on the Bayes version of the QCRB because the inclusion of
prior information is crucial in waveform estimation
\cite{twc,vantrees}. The Bayes bound also has the advantage of being
applicable to both biased and unbiased estimates
\cite{twc,vantrees}. $U_B$ should again be chosen to reduce
$J^{(Q)}(U_B)$ and thus tighten the QCRB.
  

An alternative to the QCRB is a multiparameter form of the quantum
Ziv-Zakai bound \cite{qzzb,bell}, which can also be expressed in
terms of the fidelity, but that option is beyond the scope of this
paper.

\section{Lossy optical phase detection}

\begin{figure}[htbp]
\centerline{\includegraphics[width=0.6\textwidth]{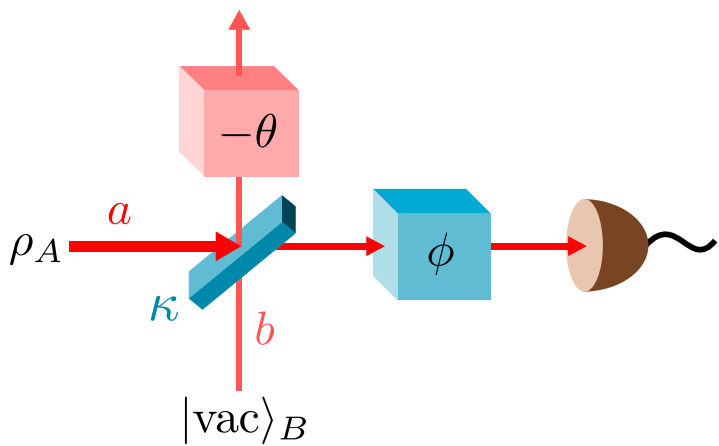}}
\caption{A model of the lossy optical phase sensing problem.}
\label{lossy_static_phase}
\end{figure}

To introduce a new technique of bounding the fidelity, I first
consider the simplest setting of an optical phase detection problem,
where one optical mode is used to detect a phase shift
\cite{ou,paris1997}, as depicted in Fig.~\ref{lossy_static_phase}.
Let $\phi$ the phase shift between the two hypotheses, $U_0 = U_A
U_{AB}$, and $U_1 = U_{AB}$, where
\begin{eqnarray}
U_A &= \exp\bk{i\phi n},
\qquad
n \equiv a^\dagger a,
\\
U_{AB} &= \exp\Bk{i \kappa\bk{a^\dagger b + a b^\dagger}}.
\end{eqnarray}
$a$ and $b$ are annihilation operators for two different modes that
satisfy commutation relations $[a,a^\dagger] = [b,b^\dagger] = 1$, $n$
is the photon-number operator for the $A$ mode, $U_{AB}$ models loss
as a beam-splitter coupling with another optical mode $B$ in vacuum
state $\ket{0}_B$ before the phase modulation, such that $\ket{\Psi} =
\ket{\psi}_A\otimes\ket{0}_B$. $U_{AB}$ can also account for loss
after the modulation, as shown in \ref{order}. The fidelity becomes
\begin{eqnarray}
F = \abs{\bra{\Psi}U_{AB}^\dagger U_B^\dagger U_A U_{AB}\ket{\Psi}}^2.
\label{F}
\end{eqnarray}
Choosing
\begin{eqnarray}
U_B = \exp\bk{i \theta b^\dagger b},
\end{eqnarray}
where $\theta$ is a free parameter to be specified later, one can
simplify (\ref{F}) using the $SU(2)$ disentangling theorem
\cite{gilmore}, as shown in \ref{su2}. The result is
\begin{eqnarray}
F &= \abs{\bra{\psi} z^{n}\ket{\psi}}^2 ,
\label{ztransform}
\\
z &\equiv \eta e^{i\phi} + (1-\eta) e^{-i\theta},
\end{eqnarray}
with 
\begin{eqnarray}
\eta \equiv \cos^2\kappa
\end{eqnarray}
defined as the quantum efficiency. For example, as shown in \ref{su2},
the fidelity for a coherent state is
\begin{eqnarray}
F_{\textrm{coh}} &=\exp\bk{- 4\eta \avg{n} \sin^2\frac{\phi}{2}},
\label{Fcoh}
\end{eqnarray}
where $\avg{O}\equiv \bra{\psi}O\ket{\psi}$.  Since $-\ln
F_{\textrm{coh}}$ depends linearly on the average photon number
$\avg{n}$, I shall define the linear scaling of $-\ln F$ with respect
to $\avg{n}$ as the shot-noise scaling for the fidelity.  Measurements
that can saturate the Bayes error bound in Theorem~\ref{Pe} for
coherent states are known \cite{helstrom,cook,wittmann,tsujino}.

To bound $F$ in general, Jensen's inequality can be used if $z$ is
real and positive.  The following lemma provides the necessary and
sufficient condition:
\begin{lemma}
\label{conds}
There exists a $\theta$ such that $z\equiv \eta
e^{i\phi}+(1-\eta)e^{-i\theta}$ is real and positive if and only
if one of the following conditions is satisfied:
\begin{eqnarray}
{\rm(I)}&:\ \eta < \frac{1}{2},\label{cond1}\\
{\rm(II)}&:\ \eta \ge \frac{1}{2}
{\rm\ and\ }
|\sin\phi| \le \frac{1-\eta}{\eta}
{\rm\ and\ }
\cos\phi > 0.
\label{cond2}
\end{eqnarray}
\end{lemma}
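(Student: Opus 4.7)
The plan is to interpret the problem geometrically in the complex plane. As $\theta$ ranges over $\mathbb{R}$, the point $(1-\eta)e^{-i\theta}$ traces the circle of radius $1-\eta$ about the origin, so $z$ traces a circle $\mathcal{C}$ of radius $1-\eta$ centered at $\eta e^{i\phi}$. The lemma is then just the statement of when $\mathcal{C}$ meets the open positive real axis, and I would attack sufficiency and necessity of each condition separately.

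For sufficiency under condition (I), I would note that when $\eta<1/2$ the distance from the origin to the center of $\mathcal{C}$ equals $\eta<1-\eta$, so the origin lies strictly inside $\mathcal{C}$; every ray from the origin must therefore cross $\mathcal{C}$, and in particular so does the positive real axis, producing a valid $\theta$. For sufficiency under condition (II), I would switch to an algebraic argument: impose $\mathrm{Im}(z)=0$, which is $(1-\eta)\sin\theta=\eta\sin\phi$, and solve it using $|\sin\phi|\le(1-\eta)/\eta$, which makes $|\eta\sin\phi/(1-\eta)|\le 1$. Picking the branch $\cos\theta\ge 0$ gives
\begin{eqnarray}
\mathrm{Re}(z)=\eta\cos\phi+\sqrt{(1-\eta)^2-\eta^2\sin^2\phi},
\end{eqnarray}
where the square root is real and nonnegative and $\eta\cos\phi>0$ by hypothesis, so $\mathrm{Re}(z)>0$.

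For necessity, I would exploit the same decomposition. If some $\theta$ makes $z$ real and positive, then $\mathrm{Im}(z)=0$ forces $\eta|\sin\phi|\le 1-\eta$, which under $\eta\ge 1/2$ rearranges to the first inequality of condition (II). The same equation also fixes $(1-\eta)^2\cos^2\theta=(1-\eta)^2-\eta^2\sin^2\phi$, giving the upper bound $\mathrm{Re}(z)\le \eta\cos\phi+\sqrt{(1-\eta)^2-\eta^2\sin^2\phi}$. Under $\eta\ge 1/2$, the quantity under the root is at most $\eta^2\cos^2\phi$, so $\mathrm{Re}(z)\le \eta\cos\phi+\eta|\cos\phi|$, which can be strictly positive only if $\cos\phi>0$. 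This pins down the last inequality of (II), and the failure of (II) with $\eta\ge 1/2$ is then equivalent to $\mathcal{C}$ missing the positive real axis.

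The main obstacle I anticipate is not conceptual but bookkeeping: keeping track of strict versus non-strict inequalities at the boundary $\eta=1/2$, and in particular handling the degenerate configurations where the square root vanishes or where $\cos\phi=0$ forces $|\sin\phi|=1$, so that "real and positive" is not accidentally softened to "real and nonnegative." Once those edge cases are dispatched, the equivalence of the two algebraic/geometric pictures closes the argument.
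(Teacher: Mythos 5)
Your proposal is correct and follows essentially the same route as the paper: the paper's proof is exactly the circle-intersection picture (the circle of radius $1-\eta$ centered at $\eta e^{i\phi}$ meeting the positive real axis, with (I) corresponding to the circle enclosing the origin and (II) to intersection on the right half-plane). Your explicit algebra for case (II) and for necessity, via $\mathrm{Im}(z)=0$ giving $(1-\eta)\cos\theta=\sqrt{(1-\eta)^2-\eta^2\sin^2\phi}$ and the bound $\mathrm{Re}(z)\le\eta\cos\phi+\eta|\cos\phi|$ when $\eta\ge 1/2$, is a sound and careful filling-in of detail the paper leaves implicit, including the edge cases $\cos\phi=0$ and $\eta=1/2$.
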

\begin{proof}
  Consider the circle traced by $z(\theta)$ centered at $\eta
  e^{i\phi}$ with radius $1-\eta$ on the complex plane. $z = |z| > 0$
  for some $\theta$ is equivalent to the condition that the circle
  intersects the positive real axis, for which the necessary and
  sufficient condition is given by one of (\ref{cond1}) (the
  circle encloses the origin for any $\phi$ and thus always intersects
  the axis) and (\ref{cond2}) (the circle intersects the axis for
  some $\phi$ on the right-hand plane only).
\end{proof}

(\ref{cond2}) holds when $\phi$ is sufficiently small.  For
example, $M\sim 100$, $|q|\sim 10^{-19}~$m, $2\pi/k\sim 1~\mu$m, and
$(1-\eta)/\eta \sim 10^{-2}$ for LIGO \cite{klmtv}, leading to $|\phi|
\sim 2Mk|q| \sim 10^{-10}$, and (\ref{cond2}) is easily satisfied.

The following theorem is a key technical result of this paper:
\begin{theorem}
\label{Fzbound}
  If (\ref{cond1}) or (\ref{cond2}) is satisfied,
\begin{eqnarray}
F &\ge F_z \equiv z^{2\avg{n}},
\label{Fzdef}
\end{eqnarray}
where
\begin{eqnarray}
z \equiv \eta e^{i\phi} + (1-\eta)e^{-i\theta},
\end{eqnarray}
and $\theta$ is chosen to make $z$ real and positive.
\end{theorem}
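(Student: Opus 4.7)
The plan is to reduce the bound to a direct application of Jensen's inequality in the photon-number basis. The hypothesis provided by Lemma \ref{conds} and the choice of $\theta$ ensures $z > 0$, and this is exactly the ingredient needed to make $z^n$ a positive self-adjoint operator and $z^k$ a convex function of $k$.

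First I would expand the input state in the photon-number basis, $\ket{\psi} = \sum_{k=0}^\infty c_k \ket{k}$, so that $z^n\ket{k} = z^k\ket{k}$ (well-defined since $z$ is real and positive) and
\begin{eqnarray}
\bra{\psi} z^n \ket{\psi} = \sum_{k=0}^\infty |c_k|^2 z^k.
\end{eqnarray}
Writing $P_k \equiv |c_k|^2$, this is the expectation $\mathbb E[z^N]$ of the nonnegative random variable $N$ with photon-number distribution $P_k$.

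Next I would apply Jensen's inequality. The function $f(k) = z^k = e^{k\ln z}$ extends to a convex function of a real argument on $[0,\infty)$, since $f''(k) = (\ln z)^2 z^k \ge 0$ for any $z>0$. Jensen's inequality therefore gives
\begin{eqnarray}
\sum_{k=0}^\infty P_k z^k \ge z^{\sum_k P_k k} = z^{\avg{n}}.
\end{eqnarray}
Because every term in $\sum_k P_k z^k$ is nonnegative and $\sum_k P_k = 1$, the left-hand side is itself nonnegative, so squaring preserves the inequality and produces
\begin{eqnarray}
F = \abs{\bra{\psi} z^n \ket{\psi}}^2 = \bk{\sum_{k} P_k z^k}^2 \ge z^{2\avg{n}} = F_z,
\end{eqnarray}
which is precisely (\ref{Fzdef}).

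There is essentially no hard obstacle here once Lemma \ref{conds} is in hand; the only point that needs care is the \emph{positivity} of $z$ (not merely its being real), which is what allows $z^k$ to be defined unambiguously and convex, and which guarantees that $\bra{\psi}z^n\ket{\psi}$ is a nonnegative real number rather than a complex phase that would make squaring the Jensen bound problematic. This is exactly why the statement of the theorem invokes the two conditions of Lemma \ref{conds} rather than just requiring $z$ to be real.
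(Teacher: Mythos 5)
Your proof is correct and follows exactly the paper's own route: the paper likewise expands $\ket{\psi}$ in eigenstates of $n$ and applies Jensen's inequality to obtain $\avg{z^n}\ge z^{\avg{n}}$, with (\ref{Fzdef}) following from (\ref{ztransform}); you have merely filled in the details (convexity of $z^k$ for $z>0$, and positivity justifying the squaring step) that the paper leaves implicit.
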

\begin{proof}
  With $z = |z| > 0$ under the condition in Lemma~\ref{conds} and
  writing $\ket{\psi}$ as a superposition of eigenstates of $n$, one
  can apply Jensen's inequality and obtain $\avg{z^n} \ge
  z^{\avg{n}}$.  (\ref{Fzdef}) then follows from
  (\ref{ztransform}).
\end{proof}
Compared with the coherent-state value given by (\ref{Fcoh}),
$F_z$ has the same shot-noise scaling with respect to the average
photon number $\avg{n}$, as both $-\ln F_{\textrm{coh}}$ and $-\ln
F_z$ scale linearly with $\avg{n}$.  Since error-free detection with
$F = 0$ is possible with pure states \cite{dariano}, this
shot-noise-scaling bound is a very strong result. It should also have
implications for M-ary phase discrimination in general
\cite{nair,nair_guha}.

The following corollaries are some analytic consequences of
Theorem~\ref{Fzbound} that exemplify its tightness:
\begin{cor}
\label{smallphase}
If $|\phi| \ll 1$ and $|\eta\phi/(1-\eta)| \ll 1$,
\begin{eqnarray}
F_z 
&\approx \exp\Bk{-\frac{\eta}{1-\eta}\avg{n}\phi^2}.
\label{Fzsmallphase}
\end{eqnarray}
\end{cor}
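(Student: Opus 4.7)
The plan is to expand $z$ explicitly under the stated smallness assumptions and then take its logarithm. First, I would determine the $\theta$ selected by Theorem \ref{Fzbound}. Requiring $z$ to be real forces $\imag z = \eta\sin\phi - (1-\eta)\sin\theta = 0$, hence $\sin\theta = \eta\sin\phi/(1-\eta)$. Since $|\phi|\ll 1$ and $|\eta\phi/(1-\eta)|\ll 1$, this choice gives $\theta\approx \eta\phi/(1-\eta)$, which is small, and the positivity of $z$ is automatic because the real part is close to $1$.

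Next I would compute $z=\real z = \eta\cos\phi + (1-\eta)\cos\theta$, expanding each cosine to second order. Using $\cos\phi \approx 1-\phi^2/2$ and $\cos\theta \approx 1 - \frac{1}{2}[\eta\phi/(1-\eta)]^2$, the constants add to $1$, and the quadratic terms combine as
\begin{eqnarray}
z \approx 1 - \frac{\eta\phi^2}{2} - \frac{\eta^2\phi^2}{2(1-\eta)} = 1 - \frac{\eta\phi^2}{2(1-\eta)},
\end{eqnarray}
after pulling out the common factor $\eta\phi^2/2$ and simplifying $1+\eta/(1-\eta) = 1/(1-\eta)$.

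Finally, since $z$ is close to $1$, I would use $\ln z \approx z-1 \approx -\eta\phi^2/[2(1-\eta)]$ and conclude $F_z = \exp(2\langle n\rangle\ln z) \approx \exp[-\eta\langle n\rangle\phi^2/(1-\eta)]$, matching (\ref{Fzsmallphase}). There is no real obstacle; the only care needed is to confirm that the neglected $O(\phi^4)$ and higher corrections are genuinely subdominant under both smallness hypotheses, which the assumption $|\eta\phi/(1-\eta)|\ll 1$ ensures even when $\eta$ is near unity so that the prefactor $\eta/(1-\eta)$ is large.
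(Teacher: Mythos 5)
Your proposal is correct and follows essentially the same route as the paper's proof: fix $\theta$ by the reality condition $\sin\theta = \eta\sin\phi/(1-\eta)$ so that $\theta\approx\theta_0\equiv\eta\phi/(1-\eta)$, expand $z$ to second order in the small quantities, and exponentiate $2\avg{n}\ln z$. The only cosmetic difference is that the paper expands $z^2$ via the exact identity $z^2 = 1-4\eta(1-\eta)\sin^2[(\phi+\theta)/2]$ with explicit $O(\phi^2)$, $O(\theta_0^2)$, $O(\theta_0\phi)$ error bookkeeping, whereas you expand $\real z$ directly through the cosines --- the same calculation in slightly different packaging.
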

\begin{proof}
  Let $\theta_0\equiv\eta\phi/(1-\eta)$.  Since $\imag z =
  \eta\sin\phi-(1-\eta)\sin \theta = 0$, $\theta = \theta_0[1+
  O(\phi^2)+ O(\theta_0^2)]$, $z^2
  =1-4\eta(1-\eta)\sin^2[(\phi+\theta)/2]= 1-\theta_0\phi [1+
  O(\phi^2) + O(\theta_0^2)]$, and $\ln z^2 =
  -\theta_0\phi[1+ O(\theta_0\phi) + O(\phi^2) +
  O(\theta_0^2)]$, which determines $F_z$ in
  (\ref{Fzdef}) and leads to (\ref{Fzsmallphase}).
\end{proof}
(\ref{Fzsmallphase}) differs from $F_{\textrm{coh}} \approx
\exp(-\eta\avg{n}\phi^2)$ by just a constant factor of $1/(1-\eta)$ in
the exponent. This $1/(1-\eta)$ enhancement factor is the same as the
maximum QFI enhancement factor in lossy static-phase estimation
\cite{kolodynski,knysh,demkowicz,escher,escher_bjp,escher_prl}.

To obtain another measure of detection error, I formalize the concept
of detectable phase shift \cite{ou,paris1997} as follows:
\begin{defin}[Detectable phase shift]
  A detectable phase shift $\phi'$ given acceptable error
  probabilities $\alpha'$ and $\beta'$ is a $\phi$ that makes
  $P_{10}\le \alpha'$ and $P_{01} \le \beta'$.
\end{defin}
\begin{cor}
\label{detectable_bound}
  Assuming that (\ref{Fzsmallphase}) is an equality,
\begin{eqnarray}
   \phi'^2 &\ge
  \frac{1-\eta}{\eta\avg{n}}(-\ln F'),
\label{detectable}
\end{eqnarray}
where 
\begin{eqnarray}
F'\equiv\max_{\beta' \ge \beta(\alpha',F)} F
\label{Fprime}
\end{eqnarray}
and $\beta$ is defined in (\ref{beta}).
\end{cor}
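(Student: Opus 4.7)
The plan is to convert the detectability requirements $P_{10}\le\alpha'$ and $P_{01}\le\beta'$ into a constraint on the fidelity $F$ via Theorem~\ref{np}, and then invert the assumed equality in (\ref{Fzsmallphase}) to read off a lower bound on $\phi^2$. First I would apply Theorem~\ref{np}: for any POVM on $\mathcal H_A$ with $P_{10}\le\alpha'$, one has $P_{01}\ge\beta(\alpha',F)$. Hence the second condition $P_{01}\le\beta'$ can only be met if $\beta(\alpha',F)\le\beta'$; otherwise no measurement can push the miss probability below $\beta'$.

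Next I would verify that the function $F\mapsto\beta(\alpha',F)$ defined in (\ref{beta}) is nondecreasing: for $F\le\alpha'$ it is identically zero, and for $F>\alpha'$ a direct differentiation of $g(F)\equiv\sqrt{\alpha'F}+\sqrt{(1-\alpha')(1-F)}$ shows that $g$ is decreasing on $(\alpha',1]$ with $g(\alpha')=1$, so $\beta=1-g^2$ is increasing there. This monotonicity makes the admissibility set $\{F:\beta(\alpha',F)\le\beta'\}$ an interval $[0,F']$, and the quantity $F'$ in (\ref{Fprime}) is exactly its right endpoint. Thus detectability is equivalent to $F\le F'$.

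Finally, substituting the assumed equality $F=\exp[-\eta\avg{n}\phi^2/(1-\eta)]$ from (\ref{Fzsmallphase}) into $F\le F'$, taking logarithms, and rearranging directly yields the claim (\ref{detectable}). The argument is essentially mechanical; the only point requiring care is the monotonicity of $\beta$ in $F$, which ensures that the feasibility region in $F$ is a single upper-bounded interval rather than a more awkward set. I do not anticipate any genuine obstacle, only bookkeeping of inequality directions: higher $F$ corresponds to less distinguishable states and hence larger $\beta$, so the detectability constraint translates to an upper bound on $F$ and, via the monotone decreasing map $\phi^2\mapsto F$, to a lower bound on $\phi^2$.
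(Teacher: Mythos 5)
Your argument follows the paper's proof in all essentials: detectability forces $\beta(\alpha',F)\le\beta'$ via Theorem~\ref{np}, hence $F\le F'$, and inverting the exponential expression yields (\ref{detectable}). Two small remarks on the first half. Your monotonicity computation for $\beta(\alpha',\cdot)$ is correct, but it is strictly unnecessary for the direction you need: by the definition (\ref{Fprime}) of $F'$ as a maximum over the constraint set, $\beta(\alpha',F)\le\beta'$ already implies $F\le F'$ regardless of the shape of that set; monotonicity only shows the set is an interval. Also, ``detectability is equivalent to $F\le F'$'' is an overstatement --- only necessity is established (or needed); the converse would require achievability of the Helstrom boundary, which you do not argue.

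There is one genuine slip in the final step: (\ref{Fzsmallphase}) is a statement about the lower bound $F_z$, not about $F$ itself, so you may not substitute $F=\exp[-\eta\avg{n}\phi^2/(1-\eta)]$ --- that identity fails for a general state $\ket{\psi}$, for which only $F\ge F_z$ holds. The corollary's hypothesis gives $F_z=\exp[-\eta\avg{n}\phi^2/(1-\eta)]$ exactly, and the missing link is Theorem~\ref{Fzbound} (valid under condition (\ref{cond1}) or (\ref{cond2})), which supplies $F\ge F_z$ and hence the chain
\begin{eqnarray}
\exp\Bk{-\frac{\eta}{1-\eta}\avg{n}\phi'^2} = F_z \le F \le F';
\end{eqnarray}
taking logarithms then gives (\ref{detectable}). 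This is precisely how the paper closes the argument (``Since $F\ge F_z$, $F_z\le F'$ must hold''). As written, your proof establishes the conclusion only under the stronger and state-specific assumption that $F$ itself equals the exponential; with the one-line repair $F\ge F_z$ inserted, it coincides with the paper's proof.
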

\begin{proof}
  Any achievable $(P_{10},P_{01})$ must lie above the convex curve
  $P_{01}=\beta(P_{10},F)$.  This means that for $P_{10}\le \alpha'$
  and $P_{01} \le \beta'$, $\beta' \ge \beta(\alpha',F)$ must hold,
  and hence $F \le \max_{\beta'\ge \beta(\alpha',F)} F\equiv
  F'(\alpha',\beta')$. Since $F \ge F_z$, $F_z \le F'$ must hold for
  the constraints on $(P_{10},P_{01})$ to be
  possible. (\ref{detectable}) then follows from
  (\ref{Fzsmallphase}).
\end{proof}
The lower bound in (\ref{detectable}) is lower than the shot-noise
limit by a constant factor of $1-\eta$ only, ruling out the kind of
Heisenberg scaling suggested by \cite{ou,paris1997} for lossy
weak phase detection in the $\avg{n}\to\infty$ limit.
\begin{cor}
\label{loweta}
If $\eta \ll 1$, $F_z \approx F_{{\rm coh}}$.
\end{cor}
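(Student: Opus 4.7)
The plan is to show that in the $\eta \ll 1$ regime the real-positive value of $z$ selected by the Lemma~\ref{conds} construction reduces, to leading order in $\eta$, to exactly the factor that produces $F_{\text{coh}}$. First I note that $\eta \ll 1$ automatically satisfies condition (I) in (\ref{cond1}), so Theorem~\ref{Fzbound} applies with no extra restriction on $\phi$, and a suitable $\theta$ exists.

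Next I determine $\theta$ perturbatively. The imaginarity condition $\imag z = \eta\sin\phi - (1-\eta)\sin\theta = 0$ gives $\sin\theta = \eta\sin\phi/(1-\eta) = O(\eta)$, hence $\theta = O(\eta)$ and $\cos\theta = 1 + O(\eta^2)$. Substituting into $\real z = \eta\cos\phi + (1-\eta)\cos\theta$ and collecting terms yields
\begin{eqnarray}
z = 1 - 2\eta\sin^2\frac{\phi}{2} + O(\eta^2).
\end{eqnarray}

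Taking the logarithm, $\ln z = -2\eta\sin^2(\phi/2) + O(\eta^2)$, so that
\begin{eqnarray}
F_z = z^{2\avg{n}} = \exp\Bk{-4\eta\avg{n}\sin^2\frac{\phi}{2} + O(\eta^2)\avg{n}},
\end{eqnarray}
which matches $F_{\text{coh}}$ in (\ref{Fcoh}) to leading order in $\eta$.

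The step most deserving of care is not technical but interpretive: one must make explicit the range of $\avg{n}$ for which the approximation is meaningful. The $O(\eta^2)\avg{n}$ correction in the exponent is negligible in the strict $\eta\to 0$ limit at fixed $\avg{n}$, or more generally whenever $\eta^2\avg{n}\ll 1$, which is the natural regime of interest since otherwise the fidelity itself is already exponentially small. Under that mild proviso the expansion above is uniform, and $F_z \approx F_{\text{coh}}$ as claimed.
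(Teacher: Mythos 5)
Your proof is correct and follows essentially the same route as the paper's: solve $\imag z = 0$ to get $\theta = O(\eta)$, expand $z = 1 - 2\eta\sin^2(\phi/2) + O(\eta^2)$, take the logarithm, and compare exponents with $F_{\textrm{coh}}$. Your closing caveat about the $O(\eta^2)\avg{n}$ term is sensible but unnecessary in the paper's formulation, which states the conclusion multiplicatively as $-\ln F_z = -[1+O(\eta)]\ln F_{\textrm{coh}}$, so the log-fidelities (i.e., the error exponents) agree to relative order $\eta$ uniformly in $\avg{n}$.
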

\begin{proof}
  Since $\imag z = 0$, $\theta = \sin^{-1}[\eta\sin\phi/(1-\eta)]
  = O(\eta)$, $z = \real z = \eta\cos\phi + (1-\eta) \cos\theta =
  1+\eta\cos\phi -\eta +O(\eta^2)$, and $\ln z^2 =
  -4\eta\sin^2(\phi/2) + O(\eta^2)$, which leads to $-\ln F_z =
  -[1+O(\eta)]\ln F_{\textrm{coh}}$.
\end{proof}
Corollary~\ref{loweta} proves that, analogous to the case of target
detection \cite{nair2011}, the coherent state is near-optimal for any
phase detection problem in the low-efficiency limit with vacuum noise,
ruling out any significant enhancement of the error exponent by
quantum illumination \cite{lloyd,tan,pirandola}. It remains an open
question whether quantum illumination is useful for high-thermal-noise
low-efficiency phase detection, as \cite{lloyd,tan} show that
quantum illumination is useful for low-efficiency target detection
only when the thermal noise is high.

\section{\label{waveform_detect}Waveform detection}
\begin{figure}[htbp]
\centerline{\includegraphics[width=0.6\textwidth]{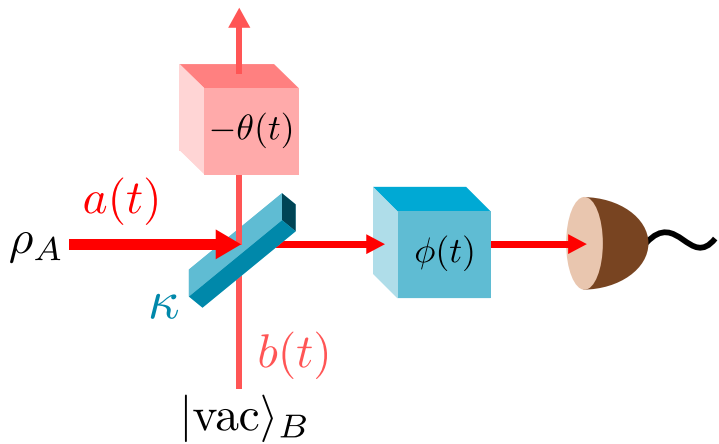}}
\caption{A model of the lossy optical phase waveform sensing problem.
}
\label{lossy_phase}
\end{figure}

I now turn to the problem of waveform detection, as depicted in
Fig.~\ref{lossy_phase}. The results are all natural generalizations of
the single-mode case. Let $\phi(t)$ be the time-varying phase shift
between the two hypotheses, $U_0 = U_A U_{AB}$, and $U_1 = U_{AB}$,
where
\begin{eqnarray}
U_A &= \exp\Bk{i\int dt  \phi(t) I(t)},
\qquad
I(t) \equiv a^\dagger(t)a(t),
\\
U_{AB} &= \exp\BK{i \kappa\int dt
\Bk{a^\dagger(t) b(t) + a(t) b^\dagger(t)}},
\qquad \int dt \equiv  \int_{t_0}^{t_f} dt.
\label{UAB}
\end{eqnarray}
$a(t)$ and $b(t)$ are now annihilation operators for one-dimensional
optical fields with commutation relations $[a(t),a^\dagger(t')] =
[b(t),b^\dagger(t')]= \delta(t-t')$ and $I(t)$ is the photon flux
\cite{gardiner_zoller}. Choosing
\begin{eqnarray}
U_B = \exp\Bk{i \int dt \theta(t) b^\dagger(t) b(t)},
\label{UB}
\end{eqnarray}
where $\theta(t)$ is a free function to be specified later, the
fidelity can be computed by a discrete-time approach. This is done by
deriving the fidelity for the multimode case, writing $a(t_j) =
\sqrt{\delta t}a_j$ and $b(t_j) = \sqrt{\delta t}b_j$ in terms of the
discrete-mode operators $a_j$ and $b_j$ and time $t_j = t_0+j\delta
t$, and taking the $\delta t\to 0$ continuous-time limit at the end of
the calculations. The result is
\begin{eqnarray}
F[\phi(t)] = \abs{\bra{\Psi}U_{AB}^\dagger U_B^\dagger U_A U_{AB}\ket{\Psi}}^2
=\abs{\bra{\psi} 
\exp\Bk{\int dt I(t)\ln z(t)}\ket{\psi}}^2 ,
\label{ztransform2}
\\
z(t) \equiv \eta e^{i\phi(t)} + (1-\eta) e^{-i\theta(t)}.
\end{eqnarray}
For example, the coherent-state value is
\begin{eqnarray}
F_{\textrm{coh}}[\phi(t)] &=\exp\Bk{- 4\eta \int dt \avg{I(t)} \sin^2\frac{\phi(t)}{2}}.
\label{Fcoh2}
\end{eqnarray}
$-\ln F_{\textrm{coh}}[\phi(t)]$ scales linearly with $\avg{I(t)}$,
and this linear scaling shall be defined as the shot-noise scaling for
waveform detection. 

Jensen's inequality can again be used to bound the fidelity in
(\ref{ztransform2}) if $z(t)$ is real and positive. The
generalizations of Theorem~\ref{Fzbound} and Corollaries
\ref{smallphase}, \ref{detectable_bound}, and \ref{loweta} for
waveform detection are listed below (with the proofs omitted because
they are straightforward generalizations of the ones in the
single-mode case):
\begin{cor}
If (\ref{cond1}) or (\ref{cond2}) is satisfied for all
  $\phi(t)$,
\begin{eqnarray}
F[\phi(t)] \ge F_z[\phi(t)]
\equiv \exp \Bk{\int dt \Avg{I(t)}\ln z^2(t)},
\label{Fzdef2}
\end{eqnarray}
where
\begin{eqnarray}
z(t) \equiv \eta e^{i\phi(t)} + (1-\eta)e^{-i\theta(t)},
\end{eqnarray}
and $\theta(t)$ is chosen to make $z(t)$ real and positive.
\end{cor}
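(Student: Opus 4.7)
The plan is to mirror the proof of Theorem~\ref{Fzbound}: apply Lemma~\ref{conds} pointwise in $t$ to make $z(t)$ real and positive, and then use Jensen's inequality on the exponential in the fidelity expression (\ref{ztransform2}). The three key ingredients are (i) the pointwise applicability of Lemma~\ref{conds}, (ii) the self-adjointness of the operator $\hat A\equiv\int dt\,I(t)\ln z(t)$ that sits in the exponent, and (iii) convexity of the exponential.

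First, by Lemma~\ref{conds} applied separately at each $t$, the hypothesis that (\ref{cond1}) or (\ref{cond2}) holds for every value $\phi(t)$ guarantees the existence of a function $\theta(t)$ that makes $z(t)$ real and positive. Fixing such a $\theta(t)$ renders $\ln z(t)$ real-valued, so $\hat A\equiv\int dt\,I(t)\ln z(t)$ is self-adjoint and $e^{\hat A}$ is a positive operator. In particular $\bra{\psi}e^{\hat A}\ket{\psi}\ge 0$, so by (\ref{ztransform2}) one has $F[\phi(t)]=\bk{\bra{\psi}e^{\hat A}\ket{\psi}}^2$.

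To apply Jensen's inequality, I revert to the discretized picture used to derive (\ref{ztransform2}): the number operators at distinct time slices mutually commute, so $\hat A$ is diagonal in their joint eigenbasis with real eigenvalues. Expanding $\ket{\psi}$ in this basis turns $\bra{\psi}e^{\hat A}\ket{\psi}$ into a convex sum $\sum_n p_n e^{\lambda_n}$ with a probability distribution $p_n$ and real $\lambda_n$. Jensen's inequality for the convex exponential then gives $\sum_n p_n e^{\lambda_n}\ge\exp\bk{\sum_n p_n\lambda_n}=\exp\avg{\hat A}$. Squaring, taking the $\delta t\to 0$ limit, and using $\avg{\hat A}=\int dt\,\Avg{I(t)}\ln z(t)$ yields
\begin{equation*}
F[\phi(t)]\ge\exp\Bk{2\int dt\,\Avg{I(t)}\ln z(t)}=\exp\Bk{\int dt\,\Avg{I(t)}\ln z^2(t)}=F_z[\phi(t)],
\end{equation*}
which is the claimed bound.

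The only nontrivial step is the continuum limit, but this reuses verbatim the discretize-then-$\delta t\to 0$ procedure already employed to obtain (\ref{ztransform2}); no new technicalities appear. I expect the main hazard to be purely notational: keeping careful track of the $\sqrt{\delta t}$ normalization between $a(t_j)$ and the discrete operator $a_j$ so that the continuum flux integrals emerge with the correct prefactors after Jensen is applied.
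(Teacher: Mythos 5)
Your proof is correct and takes essentially the same route the paper intends: the paper omits the proof, stating it is a straightforward generalization of Theorem~\ref{Fzbound}, whose single-mode argument is exactly your combination of Lemma~\ref{conds} (applied pointwise to make $z(t)$ real and positive), expansion of $\ket{\psi}$ in the joint eigenbasis of the commuting number operators, and Jensen's inequality for the convex exponential. Your discretize-then-take-$\delta t\to 0$ procedure is the same one the paper already uses to derive (\ref{ztransform2}), so no new content is needed beyond what you supply.
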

This bound is also a shot-noise-scaling bound, as $-\ln F_z[\phi(t)]$
scales linearly with $\avg{I(t)}$.

\begin{cor}
\label{smallphase2}
If $|\phi(t)| \ll 1$ and $|\eta\phi(t)/(1-\eta)| \ll 1$,
\begin{eqnarray}
F_z[\phi(t)]
&\approx \exp\Bk{-\frac{\eta}{1-\eta}\int dt \avg{I(t)}\phi^2(t)}.
\label{Fzsmallphase2}
\end{eqnarray}
\end{cor}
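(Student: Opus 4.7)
The plan is to reduce this waveform statement to the pointwise version already established in Corollary~\ref{smallphase}. The expression for $F_z[\phi(t)]$ in (\ref{Fzdef2}) is an exponential of $\int dt\,\avg{I(t)}\ln z^2(t)$, in which $z(t)$ depends only on the instantaneous $\phi(t)$ and $\theta(t)$. So once $\ln z^2(t)$ is evaluated at each $t$ under the small-phase hypothesis, integration against $\avg{I(t)}$ is immediate.

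First I would pick $\theta(t)$ at every $t$ so that $\imag z(t)=0$, just as in the static case. With $\theta_0(t)\equiv \eta\phi(t)/(1-\eta)$ as the leading-order ansatz, the balance $\eta\sin\phi(t)=(1-\eta)\sin\theta(t)$ forces $\theta(t)=\theta_0(t)\bk{1+O(\phi^2(t))+O(\theta_0^2(t))}$. This is exactly the expansion carried out inside the proof of Corollary~\ref{smallphase}, just regarded as an identity at each $t$; no new estimates are required.

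Next I would substitute into $z^2(t)=1-4\eta(1-\eta)\sin^2\bk{[\phi(t)+\theta(t)]/2}$ and Taylor expand. The calculation mirrors the single-mode case and yields $\ln z^2(t) = -\bk{\eta/(1-\eta)}\phi^2(t)\bk{1+O(\theta_0(t)\phi(t))+O(\phi^2(t))+O(\theta_0^2(t))}$. Plugging this into (\ref{Fzdef2}) and retaining the leading term produces (\ref{Fzsmallphase2}).

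There is essentially no obstacle beyond bookkeeping. The only point that deserves a brief remark is that the pointwise expansion must be uniform over $t\in[t_0,t_f]$ so that the error terms integrate to something subleading; this is guaranteed by reading the hypotheses $|\phi(t)|\ll 1$ and $|\eta\phi(t)/(1-\eta)|\ll 1$ as uniform bounds in $t$, in which case the error contributions to the exponent are bounded by $\max_t\bk{|\phi(t)|+|\theta_0(t)|}^2$ times the leading integral and therefore drop out in the small-phase limit.
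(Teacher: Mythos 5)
Your proof is correct and takes essentially the same route the paper intends: the paper omits the proof of Corollary~\ref{smallphase2} precisely because it is the pointwise application of the expansion in the proof of Corollary~\ref{smallphase} (choosing $\theta(t)$ at each $t$ to make $z(t)$ real, expanding $\ln z^2(t) \approx -\eta\phi^2(t)/(1-\eta)$), followed by integration against $\avg{I(t)}$ in (\ref{Fzdef2}), which is exactly what you do. Your closing remark on reading the hypotheses as uniform bounds in $t$ is a sensible, harmless sharpening of the bookkeeping.
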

The enhancement factor $1/(1-\eta)$ in the exponent is the same as
that in the single-mode case in (\ref{Fzsmallphase}).

\begin{cor}
Assuming that (\ref{Fzsmallphase2}) is an equality, a detectable
phase shift $\phi'(t)$ satisfies
\begin{eqnarray}
  \int dt \avg{I(t)} \phi'^2(t) &\ge
  \frac{1-\eta}{\eta}(-\ln F'),
\label{detectable2}
\end{eqnarray}
where $F'$ is defined in (\ref{Fprime}).
\end{cor}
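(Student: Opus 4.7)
The plan is to recycle the scalar-phase argument of Corollary~\ref{detectable_bound} essentially line for line, now using the waveform fidelity $F[\phi(t)]$ of (\ref{ztransform2}) and its shot-noise lower bound $F_z[\phi(t)]$ of (\ref{Fzdef2}) in place of their single-mode counterparts. All three ingredients are already in hand: the Neyman--Pearson fidelity bound of Theorem~\ref{np}, the Jensen-based waveform bound $F[\phi(t)]\ge F_z[\phi(t)]$, and the small-phase exponential form (\ref{Fzsmallphase2}).

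First, I would apply Theorem~\ref{np} to the detection problem with waveform hypothesis $\phi'(t)$ against the null. Any POVM on $\mathcal H_A$ produces an operating point lying on or above the convex curve $P_{01}=\beta(P_{10},F[\phi'(t)])$, so the detectability constraints $P_{10}\le \alpha'$ and $P_{01}\le \beta'$ force $\beta'\ge\beta(\alpha',F[\phi'(t)])$. By the definition (\ref{Fprime}) of $F'$, this forces $F[\phi'(t)]\le F'$, and combining with the waveform corollary $F[\phi'(t)]\ge F_z[\phi'(t)]$ gives $F_z[\phi'(t)]\le F'$.

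Next, I would take $-\ln(\cdot)$ of both sides. Under the hypothesis that (\ref{Fzsmallphase2}) holds with equality, this converts $F_z[\phi'(t)]\le F'$ into $\frac{\eta}{1-\eta}\int dt \avg{I(t)}\phi'^2(t)\ge -\ln F'$, which rearranges to (\ref{detectable2}) after multiplication by $(1-\eta)/\eta$.

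I do not expect any real obstacle here: all of the nontrivial work (the fidelity computation via the $SU(2)$ disentangling theorem, Jensen's inequality applied to $\avg{\exp[\int dt\, I(t)\ln z(t)]}$, and the small-phase expansion) has already been carried out in the preceding statements. The only subtlety worth double-checking is that the small-phase conditions of Corollary~\ref{smallphase2} hold uniformly on the support of $\phi'(t)$; but this is already absorbed into the standing hypothesis that (\ref{Fzsmallphase2}) is treated as an equality, so no further argument is required.
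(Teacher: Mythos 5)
Your proposal is correct and takes essentially the same route as the paper: the paper explicitly omits this proof as a ``straightforward generalization'' of the single-mode Corollary~\ref{detectable_bound}, and your argument is exactly that proof transplanted to the waveform setting, chaining $F_z[\phi'(t)] \le F[\phi'(t)] \le F'$ via Theorem~\ref{np} and the convexity of $\beta(\alpha,F)$, then inverting the assumed equality (\ref{Fzsmallphase2}). No gaps.
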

(\ref{detectable2}) is a more general form of the shot-noise
limit on a detectable phae shift. For example, if $\avg{I(t)}$ is
constant, the time-averaged energy of the detectable phase shift
satisfies
\begin{eqnarray}
\frac{1}{t_f-t_0}\int dt \phi'^2(t) &\ge
\frac{1-\eta}{\eta N}(-\ln F'),
\end{eqnarray}
with $N \equiv (t_f-t_0) \avg{I}$.

\begin{cor}
\label{loweta2}
If $\eta \ll 1$, $F_z[\phi(t)] \approx F_{{\rm coh}}[\phi(t)]$.
\end{cor}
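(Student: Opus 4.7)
The plan is to lift the single-mode argument used to prove Corollary~\ref{loweta} to the waveform setting by applying it pointwise in $t$. Because $F_z[\phi(t)]$ in (\ref{Fzdef2}) and $F_{\rm coh}[\phi(t)]$ in (\ref{Fcoh2}) are both exponentials of integrals of local functions of $\phi(t)$ weighted by $\avg{I(t)}$, matching the integrands up to $O(\eta)$ relative corrections will immediately yield the claim.

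Concretely, I would begin by choosing, for each $t$ independently, the phase $\theta(t)$ that makes $z(t)=\eta e^{i\phi(t)}+(1-\eta)e^{-i\theta(t)}$ real and positive. Setting $\imag z(t)=0$ gives $\theta(t)=\sin^{-1}[\eta\sin\phi(t)/(1-\eta)]=O(\eta)$ uniformly in $t$ for any bounded $\phi(t)$, exactly as in the single-mode proof. Expanding $\real z(t)=\eta\cos\phi(t)+(1-\eta)\cos\theta(t)=1+\eta\cos\phi(t)-\eta+O(\eta^2)$ and taking logarithms then gives
\begin{eqnarray}
\ln z^2(t) = -4\eta\sin^2\tfrac{\phi(t)}{2} + O(\eta^2),
\end{eqnarray}
where the $O(\eta^2)$ remainder depends only on $\phi(t)$ through bounded trigonometric functions.

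Substituting this expansion into (\ref{Fzdef2}) and comparing with (\ref{Fcoh2}) yields
\begin{eqnarray}
-\ln F_z[\phi(t)] = \Bk{1+O(\eta)}\bk{-\ln F_{\rm coh}[\phi(t)]},
\end{eqnarray}
which is the waveform analogue of the single-mode conclusion. This is the desired result.

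The only technical subtlety, which I would flag as the main (mild) obstacle, is ensuring that the $O(\eta^2)$ remainder in $\ln z^2(t)$ can be pulled outside the time integral: this requires $\phi(t)$ to be uniformly bounded on $[t_0,t_f]$ and $\int dt\,\avg{I(t)}$ to be finite, so that $\int dt\,\avg{I(t)}\cdot O(\eta^2)$ is negligible compared to the leading $O(\eta)$ term $-4\eta\int dt\,\avg{I(t)}\sin^2[\phi(t)/2]$. Under these mild regularity assumptions, already implicit in the waveform detection setup, the proof is a direct term-by-term generalization of Corollary~\ref{loweta} and no new ideas are required.
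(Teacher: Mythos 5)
Your proof is correct and is essentially the paper's own approach: the paper omits the proof of Corollary~\ref{loweta2}, stating it is a straightforward generalization of the single-mode Corollary~\ref{loweta}, and your pointwise-in-$t$ choice of $\theta(t)=\sin^{-1}[\eta\sin\phi(t)/(1-\eta)]=O(\eta)$ with the expansion $\ln z^2(t)=-4\eta\sin^2[\phi(t)/2]+O(\eta^2)$ substituted into (\ref{Fzdef2}) is exactly that generalization. Your closing remark on uniform boundedness of $\phi(t)$ and finiteness of $\int dt\,\avg{I(t)}$ is a reasonable regularity caveat that the paper leaves implicit.
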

This means that, similar to the single-mode case, the coherent state
is near-optimal for phase waveform detection in the low-efficiency
limit with vacuum noise, and no significant quantum enhancement is
possible even when multiple modes are available.

The derivations so far assume that the waveform $\phi(t)$ is known 
exactly. Error bounds for stochastic waveform detection
can also be obtained by averaging $F[\phi(t)]$ over the prior
statistics of $\phi(t)$, as shown in \cite{tsang_nair},
and analytically tractable bounds can be obtained if the prior
statistics are Gaussian and $F[\phi(t)]$ is also Gaussian,
such as the one given by (\ref{Fzsmallphase2}).

\section{\label{estimation}Waveform estimation}
Consider now the waveform estimation problem, using the same model
shown in Fig.~\ref{lossy_phase}. Unlike the previous section, where
the free function $\theta(t)$ is chosen to be an instantaneous
function of $\phi(t)$, here I assume $\theta(t) = \int d\tau
\lambda(t-\tau)\phi(\tau)$, as this can result in an even tighter but
still analytically tractable bound.  The QFI
$J^{(Q)}(t_j,t_k)\equiv\lim_{\delta t\to 0} J_{jk}^{(Q)}/\delta t^2$
for estimating $\phi(t)$ is calculated in \ref{qfi} and given by
\begin{eqnarray}
J_\phi^{(Q)}(t,t') &= 
4\int d\tau d\tau' 
\Big[r_1(t,\tau)r_1(t',\tau')
\Avg{\Delta I(\tau)\Delta I(\tau')}
\nonumber\\&\quad
+ r_2(t,\tau)r_2(t',\tau')\Avg{I(\tau)} \delta(\tau-\tau')
\Big],
\nonumber\\
r_1(t,\tau) &\equiv \eta \delta(t-\tau)-(1-\eta)\lambda(\tau-t),
\nonumber\\
r_2(t,\tau) &\equiv \sqrt{\eta(1-\eta)}\Bk{\delta(t-\tau)+
\lambda(\tau-t)},
\nonumber\\
\Delta I(t) &\equiv I(t)-\avg{I(t)}.
\label{JQphi}
\end{eqnarray}
If the light source has
stationary statistics, $\avg{I(t)}$ is constant, $\avg{\Delta
  I(t)\Delta I(t')}$ depends on $t-t'$ only, and a power spectral
density $S_{\Delta I}(\omega)$ can be defined by 
\begin{eqnarray}
\avg{\Delta I(t)\Delta I(t')} = \intall \frac{d\omega}{2\pi}
S_{\Delta I}(\omega)\exp[i\omega(t-t')].
\end{eqnarray}
A spectral form of $J_\phi^{(Q)}$ in the limit
$(t_0,t_f)\to(-\infty,\infty)$ is then 
\begin{eqnarray}
J_\phi^{(Q)}(\omega) =
4\Bk{|\eta-(1-\eta)\lambda(\omega)|^2 S_{\Delta I}(\omega)+
|1+\lambda(\omega)|^2\eta(1-\eta)\avg{I}},
\\
\lambda(\omega)\equiv
\int dt\lambda(t)\exp(i\omega t).
\end{eqnarray}
The minimum QFI becomes
\begin{eqnarray}
\min_\lambda J_\phi^{(Q)}(\omega) = 
4\Bk{\frac{1}{S_{\Delta I}(\omega)} + \frac{1-\eta}{\eta \Avg{I}}}^{-1}.
\label{min_J}
\end{eqnarray}
This is a generalization of earlier results for lossy static-phase
estimation in
\cite{kolodynski,knysh,demkowicz,escher,escher_bjp,escher_prl}.
Assuming further that $\phi(t)$ is a linear functional of the waveform
of interest $x(t)$,
\begin{eqnarray}
\phi(t)= \int dt' g(t-t')x(t'),
\label{functional}
\end{eqnarray}
a QCRB is then \cite{twc}
\begin{eqnarray}
\mathbb E [\tilde x(t)-x(t)]^2
\ge \intall \frac{d\omega}{2\pi}
\frac{1}{|g(\omega)|^2\min_\lambda J_\phi^{(Q)}(\omega) + J_x^{(C)}(\omega)},
\end{eqnarray}
where $g(\omega) \equiv \int dt g(t)\exp(i\omega t)$ and
$J_x^{(C)}(\omega)$ is the prior information in spectral form. For a
coherent state,
\begin{eqnarray}
\avg{\Delta I(t)\Delta I(t')}_{\textrm{coh}} = \avg{I}\delta(t-t'),
\\
S_{\Delta I,\textrm{coh}}(\omega) = \avg{I},
\\
\min_\lambda J_{\phi,\textrm{coh}}^{(Q)}(\omega) = 4\eta \avg{I}.
\end{eqnarray}
Compared with the coherent-state value,
the QFI for any state is limited by the same shot-noise scaling:
\begin{eqnarray}
\min_\lambda J_\phi^{(Q)}(\omega) \le \frac{4\eta \avg{I}}{1-\eta}.
\end{eqnarray}
This rules out the kind of quantum-enhanced scalings suggested by
\cite{berry2002,berry2006,tsl2008,tsl2009} in the high-flux
limit when loss is present.

\section{\label{sec_optomech}Optomechanical force sensing}
\begin{figure}[htbp]
\centerline{\includegraphics[width=0.6\textwidth]{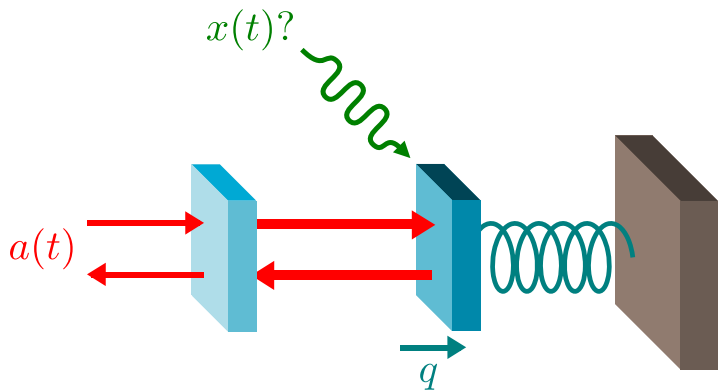}}
\caption{Schematic of a quantum optomechanical force sensor.}
\label{optomech}
\end{figure}

For a more complex example, consider the estimation of a force $x(t)$,
$t\in[t_0,t_f]$, on a quantum moving mirror via continuous optical
measurements, as illustrated in Fig.~\ref{optomech}.  For simplicity,
assume that any optical cavity dynamics can be adiabatically
eliminated \cite{chen2013}. Let $(q,p)$ be the mechanical position and
momentum operators, and $a(t)$ be the annihilation operator for the
one-dimensional optical field.  Suppose
\begin{eqnarray}
U_x = U[x(t)] = \mathcal T \exp\Bk{\frac{1}{i\hbar}\int_{t_0}^{t_f} dt H(t,x(t))},
\end{eqnarray}
with a Hamiltonian given by
\begin{eqnarray}
H(t,x(t)) = H_B(t,q,p,x(t)) -2\hbar Mk q I(t),
\end{eqnarray}
where $\mathcal T$ is the time-ordering superoperator, $H_B$ is the
mechanical Hamiltonian, $M$ is the effective number of optical
reflections by the mirror, $k$ is the optical wavenumber, and $I(t)
\equiv a^\dagger(t)a(t)$ is the photon flux.

In an optomechanics experiment, the mechanical oscillator is measured
only through the optical field, so one can take the mechanical Hilbert
space to be part of the inaccessible Hilbert space $\mathcal H_B$ and
replace $U_x$ by $ U_B^\dagger U_x$ with any $U_B$, according to
Sec.~\ref{purify}. Let
\begin{eqnarray}
U_B'[t_f,x(t)] \equiv \mathcal T\exp\Bk{\frac{1}{i\hbar}\int_{t_0}^{t_f} dt H_B(t,q,p,x(t))},
\\
U_I[x(t)] \equiv U_B'^\dagger[t_f,x(t)]U[x(t)],
\end{eqnarray}
which can be calculated using the interaction picture. The result is
\begin{eqnarray}
U_I[x(t)] = \mathcal T\exp\BK{2iMk\int_{t_0}^{t_f} dt q_I[t,x(t')] I(t)},
\\
q_I[t,x(t')] \equiv U_B'^\dagger[t,x(t')] q U_B'[t,x(t')].
\end{eqnarray}
If the mechanical dynamics is linear, $q_I$ can be
expressed in terms of the mechanical impulse-response function
$h(t,t')$ as 
\begin{eqnarray}
q_I[t,x(t')] = q_0(q,p,t) + \int_{t_0}^{t_f} dt'h(t,t')x(t'),
\end{eqnarray}
where $q_0$ is the transient solution.  To account for optical loss,
the techniques presented in the previous sections can be used, despite
the presence of an operator $q_0$ in the phase shift.  With the two
hypotheses given by $x(t) = x^{(0)}(t)$ and $x(t) = x^{(1)}(t)$, the
fidelity is
\begin{eqnarray}
F = \abs{\bra{\Psi}U_{AB}^\dagger U_I^\dagger[x^{(1)}(t)] 
U_B^\dagger U_I[x^{(0)}(t)] U_{AB}\ket{\Psi}}^2,
\label{optomech_F}
\end{eqnarray}
where $U_{AB}$ is given by (\ref{UAB}) and $U_B$ is given by
(\ref{UB}). As $U_B$ commutes with $U_I$,
(\ref{optomech_F}) can be rewritten as
\begin{eqnarray}
F = \abs{\bra{\Psi}U_{AB}^\dagger 
U_B^\dagger U_A U_{AB}\ket{\Psi}}^2,
\label{optomech_F2}
\\
U_A = U_I^\dagger[x^{(1)}(t)] U_I[x^{(0)}(t)]
=\exp\Bk{i\int_{t_0}^{t_f} dt \phi(t)I(t)},
\\
\phi(t) = 2Mk\int_{t_0}^{t_f} dt' h(t,t')\Bk{x^{(0)}(t')-x^{(1)}(t')}.
\end{eqnarray}
(\ref{optomech_F2}) is now identical to (\ref{ztransform2}),
the fidelity expression for optical phase waveform detection and
estimation.  Using $U_B'$, the mechanical Hilbert space has been
removed from the model, and the problem has been transformed to the
problem of sensing of a classical phase shift $\phi(t)$. The results
derived in the preceding sections can then be applied to this quantum
optomechanical sensing model.

\section{Relevance to quantum optics experiments}
The theoretical results presented here are especially relevant to the
experiments reported in \cite{wheatley,yonezawa,iwasawa}.  The
experiment in \cite{iwasawa}, in particular, applies a stochastic
force on a classical mirror probed by a continuous-wave optical beam
in coherent or phase-squeezed states. The waveform of interest $x(t)$
can then be the mirror position, momentum, or the force.  The phase
shift $\phi(t)$ is given by (\ref{functional}), which is a linear
functional of $x(t)$ with impulse-response function $g(t)$, and
measured in the experiment by a homodyne phase-locked loop, followed
by smoothing of the data
\cite{vantrees,tsl2008,tsl2009,smooth,smooth_pra1,smooth_pra2,petersen,gammelmark2013}.
The force, for example, is a realization of the Ornstein-Uhlenbeck
process, which has a prior power spectral density in the form of
\begin{eqnarray}
S_x(\omega) = \frac{\mu}{\omega^2+\nu^2}.
\end{eqnarray}
The prior Fisher information and the QCRB in spectral form becomes
\begin{eqnarray}
J_x^{(C)}(\omega) = \frac{1}{S_x(\omega)},
\\
\mathbb E [\tilde x(t)-x(t)]^2
\ge \intall \frac{d\omega}{2\pi}
\frac{1}{|g(\omega)|^2\min_\lambda J_\phi^{(Q)}(\omega) + 1/S_x(\omega)}.
\end{eqnarray}
The smoothing error, on the other hand, is
(Sec.~6.2.3 in \cite{vantrees})
\begin{eqnarray}
\mathbb E [\tilde x(t)-x(t)]^2
= \intall \frac{d\omega}{2\pi}
\frac{1}{|g(\omega)|^2/S_{\zeta}(\omega) + 1/S_x(\omega)},
\end{eqnarray}
where $S_\zeta(\omega)$ is the power spectral density of the homodyne
measurement noise. In \cite{iwasawa}, the experimental results
are compared with the QCRBs in terms of a QFI given by
\begin{eqnarray}
J_\phi^{(Q)}(\omega) = 4S_{\Delta I}(\omega).
\end{eqnarray}
The attainment of the bounds with this QFI requires a
minimum-uncertainty optical state, perfect phase-locking, and perfect
quantum efficiency, such that $S_\zeta(\omega)S_{\Delta I}(\omega) =
1/4$. One expects that the lower QFI given by (\ref{min_J}), taking
into account the imperfect quantum efficiency of the setup
($\eta\approx 87\%$), will make the QCRB even closer to the
experimental results, demonstrating the near-optimality of the
experimental techniques in the presence of loss.

It is intriguing to see from Sec.~\ref{sec_optomech} that the bound
remains valid even if the mirror is described by a quantum
model. Achieving the bound for a quantum mirror requires measurement
backaction noise in the output to be negligible relative to the
quantum-limited optical measurement noise. This may require quantum
noise cancellation \cite{klmtv,qnc,qmfs}.

The same setups in \cite{wheatley,yonezawa,iwasawa} may also be
used for the waveform detection experiment proposed in
Sec.~\ref{waveform_detect}. For coherent states and a known $\phi(t)$,
the measurement techniques demonstrated in
\cite{cook,wittmann,tsujino} may be generalized to attain the
bounds in Sec.~\ref{waveform_detect}. If $\phi(t)$ is stochastic, a
Kennedy receiver that nulls the field in the absence of phase
modulation should be able to achieve the optimal error exponent
\cite{tsang_nair}.  It remains an open question how optimal
measurements for phase-squeezed states can be implemented, but in
theory homodyne measurements should have an error exponent on the same
order as the fundamental limits.

Gravitational-wave detectors can nowadays operate at or below
shot-noise limits at certain frequencies
\cite{ligo2011,schnabel,chen2013}. This means that the bounds derived
here should be relevant if a gravitational wave falls within the
quantum-limited frequency bands. A detailed treatment, however, is
beyond the scope of this paper.

\section{Conclusion}
I have shown that tighter quantum limits can be derived for open
sensing systems by judicious purification.  For optomechanical force
sensing, the detection and estimation error bounds here should be
approachable using the quantum optics technology demonstrated in
\cite{cook,wittmann,tsujino,wheatley,yonezawa,iwasawa} and more
realistic to achieve than the bounds in \cite{twc,tsang_nair}.

\section*{Acknowledgments}
Discussions with R.~Nair, H.~Yonezawa, H.~Wiseman, and C.~Caves are
gratefully acknowledged. This work is supported by the Singapore
National Research Foundation under NRF Grant No.~NRF-NRFF2011-07.

\appendix

\section{\label{order}Order of loss and phase modulation}
Here I prove that the reduced state $\trace_B \rho_x$ is the same
regardless of the order of the optical loss $U_{AB}$ and the phase
modulation $U_A$, viz.,
\begin{lemma}
\begin{eqnarray}
\trace_B\bk{U_A U_{AB} \rho U_{AB}^\dagger U_A^\dagger}
= \trace_B\bk{U_{AB} U_A \rho U_A^\dagger U_{AB}^\dagger}
\end{eqnarray}
if $\rho = \rho_A \otimes \rho_B$ and $\rho_B$ is a thermal state.
\end{lemma}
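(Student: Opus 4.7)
The plan is to eliminate the non-commutativity of $U_A$ and $U_{AB}$ by absorbing the resulting phase on mode $B$ into the thermal state, which is rotationally invariant. The main identity I would establish first is the Heisenberg-picture relation $U_A a U_A^\dagger = e^{-i\phi} a$ (and its adjoint), which upon conjugation yields
\begin{eqnarray}
U_A U_{AB} U_A^\dagger = \exp\Bk{i\kappa\bk{e^{i\phi} a^\dagger b + e^{-i\phi} a b^\dagger}}.
\end{eqnarray}
Defining $V_B \equiv \exp(i\phi\, b^\dagger b)$, the same algebra shows that $V_B b V_B^\dagger = e^{-i\phi} b$, hence $V_B U_{AB} V_B^\dagger$ produces exactly the right-hand side above. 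Equating the two gives the intertwining relation $U_A U_{AB} = V_B^\dagger U_{AB} V_B U_A$.

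Next I would insert this into $U_A U_{AB}\rho U_{AB}^\dagger U_A^\dagger$ to obtain
\begin{eqnarray}
U_A U_{AB}\rho U_{AB}^\dagger U_A^\dagger = V_B^\dagger U_{AB}\Bk{V_B U_A \rho U_A^\dagger V_B^\dagger} U_{AB}^\dagger V_B.
\end{eqnarray}
Because $V_B$ acts trivially on $\mathcal H_A$, it commutes with $U_A$, so the inner bracket factorizes as $(U_A \rho_A U_A^\dagger)\otimes (V_B \rho_B V_B^\dagger)$. Here I invoke the key hypothesis: a thermal state is diagonal in the number basis and therefore invariant under any phase rotation, so $V_B \rho_B V_B^\dagger = \rho_B$. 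Consequently the bracket collapses back to $U_A\rho U_A^\dagger$.

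Finally I would take $\trace_B$ of both sides and use that $\trace_B\Bk{(1_A\otimes V_B^\dagger)\,X\,(1_A\otimes V_B)} = \trace_B X$ (immediate from the cyclic property of the trace on $\mathcal H_B$) to eliminate the outer $V_B^\dagger,V_B$, yielding exactly $\trace_B(U_{AB}U_A \rho U_A^\dagger U_{AB}^\dagger)$. The only step that requires care is the thermal-state invariance: it is essential that $\rho_B$ commutes with $b^\dagger b$, which I expect to be the only place the hypothesis on $\rho_B$ is used. Everything else is a bookkeeping exercise with Heisenberg-picture transformations and standard properties of the partial trace.
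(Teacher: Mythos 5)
Your proof is correct and takes essentially the same route as the paper's: the same intertwining relation $U_A U_{AB} = V_B^\dagger U_{AB} V_B U_A$ with $V_B = \exp(i\phi\, b^\dagger b)$ (the paper's $U_B$), followed by thermal-state invariance under phase rotation and invariance of $\trace_B$ under conjugation by a unitary acting only on $\mathcal H_B$. One cosmetic slip worth fixing: since $V_B b V_B^\dagger = e^{-i\phi}b$, conjugation gives $V_B U_{AB} V_B^\dagger = \exp\Bk{i\kappa\bk{e^{-i\phi}a^\dagger b + e^{i\phi}a b^\dagger}}$, so it is $V_B^\dagger U_{AB} V_B$ (not $V_B U_{AB} V_B^\dagger$) that reproduces $U_A U_{AB} U_A^\dagger$ --- which is consistent with the intertwining relation you actually write and use, so nothing downstream is affected.
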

\begin{proof}
  Here I consider one mode in $\mathcal H_A$ and one mode
  in $\mathcal H_B$; generalization to the multimode case is
  straightforward. Suppose
\begin{eqnarray}
U_A &= \exp(i \phi a^\dagger a ),
\\
U_{AB} &= \exp[i\kappa(a^\dagger b + a b^\dagger)],
\\
U_B &= \exp(i \phi b^\dagger b ),
\end{eqnarray}
where $[a,a^\dagger] = [b,b^\dagger]=1$. Then
\begin{eqnarray}
U_A U_{AB} U_A^\dagger 
= \exp[i\kappa(e^{i\phi}a^\dagger b + e^{-i\phi} a b^\dagger)]
= U_B^\dagger U_{AB} U_B,
\\
U_A U_{AB} = U_B^\dagger U_{AB} U_B U_A.
\end{eqnarray}
Since $U_B \rho_B U_B^\dagger = \rho_B$ for a thermal state,
\begin{eqnarray}
\trace_B\bk{U_A U_{AB} \rho U_{AB}^\dagger U_A^\dagger}
&= \trace_B\bk{U_B^\dagger U_{AB} U_B U_A \rho U_A^\dagger U_B^\dagger U_{AB}^\dagger U_B}
\\
&=\trace_B\bk{U_B^\dagger U_{AB} U_A \rho U_A^\dagger U_{AB}^\dagger U_B}
\\
&=\trace_B\bk{ U_{AB} U_A \rho U_A^\dagger U_{AB}^\dagger}.
\end{eqnarray}
\end{proof}

With the concatenation property of thermal-noise channels
\cite{giovannetti2004}, any optical loss with thermal noise at any
stage of a phase modulation experiment can be modeled by a single beam
splitter before or after the modulation.


\section{\label{su2}$SU(2)$ algebra}
Consider again the single-mode case with
\begin{eqnarray}
U_B = \exp(i\theta b^\dagger b).
\end{eqnarray}
First, compute the following quantity using the Heisenberg picture:
\begin{eqnarray}
U_{AB}^\dagger U_B^\dagger U_A U_{AB} = \exp(ig),
\\
g = \phi a'^\dagger a' - \theta b'^\dagger b',
\\
a' = \cos \kappa a + i\sin\kappa b,
\\
b' = \cos \kappa b + i\sin\kappa a.
\end{eqnarray}
This gives
\begin{eqnarray}
g &= \mu a^\dagger a + \nu b^\dagger b +i\gamma(a^\dagger b - a b^\dagger),
\\
\mu &\equiv \phi \cos^2\kappa-\theta\sin^2\kappa,
\\
\nu &\equiv \phi \sin^2\kappa - \theta\cos^2\kappa,
\\
\gamma &\equiv (\phi+\theta)\sin\kappa\cos\kappa.
\end{eqnarray}
Next, define $SU(2)$ operators as
\begin{eqnarray}
J_- &\equiv a^\dagger b,
\\
J_+ &\equiv a b^\dagger,
\\
J_3 &\equiv \frac{1}{2}\bk{b^\dagger b - a^\dagger a},
\\
J &\equiv \frac{1}{2}\bk{b^\dagger b + a^\dagger a},
\end{eqnarray}
where $J$ commutes with the rest of the operators.  In terms of the
redefined operators,
\begin{eqnarray}
\exp(ig) = \exp\Bk{i(\mu+\nu)J}
\exp\Bk{i(\nu-\mu)J_3 -\gamma J_- + \gamma J_+}.
\end{eqnarray}
The following theorem is useful:
\begin{theorem}[$SU(2)$ disentangling theorem]
  Given $J_\pm$ and $J_3$ that obey the commutation relations
\begin{eqnarray}
\Bk{J_+,J_-} = 2J_3,
\qquad
\Bk{J_3,J_\pm} = \pm J_\pm,
\label{spin}
\end{eqnarray}
the following identity holds:
\begin{eqnarray}
\exp(i\lambda_+ J_++i\lambda_- J_- + i\lambda_3 J_3)
= \exp(i\Lambda_+ J_+) \Lambda_3^{J_3}\exp(i\Lambda_-J_-),
\end{eqnarray}
where
\begin{eqnarray}
\Lambda_\pm &\equiv \frac{2\lambda_\pm \sin(\xi/2)}
{\xi\cos(\xi/2)-i\lambda_3\sin(\xi/2)},
\\
\Lambda_3 &\equiv \Bk{\cos(\xi/2)-i(\lambda_3/\xi)\sin(\xi/2)}^{-2},
\\
\xi &\equiv \sqrt{\lambda_3^2 + 4\lambda_+\lambda_-}.
\end{eqnarray}
\end{theorem}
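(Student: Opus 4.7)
The plan is to reduce the identity to a direct $2\times 2$ matrix calculation in the faithful fundamental representation of the commutation relations (\ref{spin}). Both sides of the proposed equality are formal algebraic expressions in $J_\pm$ and $J_3$, with $\Lambda_3^{J_3}$ interpreted through the spectral decomposition of $J_3$; as such, an identity verified in any one faithful representation of (\ref{spin}) lifts automatically to every representation, including the bosonic realization used in the body of the paper. I would therefore take $J_+$ and $J_-$ to be the standard $2\times 2$ strictly upper- and lower-triangular matrices and $J_3 = \half\op{diag}(1,-1)$, which indeed satisfy (\ref{spin}) and generate a faithful action.

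With this choice, the matrix $M \equiv i\lambda_+ J_+ + i\lambda_- J_- + i\lambda_3 J_3$ is traceless and satisfies $M^2 = -(\xi/2)^2 I$, where $\xi^2 \equiv \lambda_3^2 + 4\lambda_+\lambda_-$ as in the statement. Expanding the exponential in closed form then yields $\exp(M) = \cos(\xi/2)\, I + (2/\xi)\sin(\xi/2)\, M$ for the left-hand side. For the right-hand side, $\exp(i\Lambda_+ J_+)$, $\Lambda_3^{J_3}$ and $\exp(i\Lambda_- J_-)$ are respectively upper-triangular, diagonal and lower-triangular, so their product is an immediate $2\times 2$ matrix whose entries can be read off by inspection.

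Matching the two matrices entry by entry gives three independent scalar equations. The $(2,2)$ entry fixes $\Lambda_3^{-1/2} = \cos(\xi/2) - i(\lambda_3/\xi)\sin(\xi/2)$, which is the stated formula for $\Lambda_3$; the two off-diagonal entries then produce $\Lambda_\pm = 2\lambda_\pm\sin(\xi/2)/[\xi\cos(\xi/2) - i\lambda_3\sin(\xi/2)]$; and the $(1,1)$ entry is automatically consistent, since both sides have unit determinant ($M$ is traceless, and each triangular or diagonal factor on the right has determinant one).

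The main obstacle is conceptual rather than computational: one must be confident that an identity verified in the two-dimensional fundamental representation really is an identity of operators in the infinite-dimensional bosonic representation, where $\Lambda_3^{J_3}$ is defined only via spectral calculus. A route that sidesteps this worry is to introduce a scale parameter $t$ on the exponent, i.e.\ $\exp[it(\lambda_+ J_+ + \lambda_- J_- + \lambda_3 J_3)]$, and show directly from (\ref{spin}) that the left- and right-hand sides satisfy the same first-order operator differential equation in $t$ with the same initial condition at $t=0$; uniqueness then yields the identity wherever the factors on the right are well-defined.
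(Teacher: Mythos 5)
Your proof is correct, and it is essentially the argument behind the paper's own ``proof,'' which consists entirely of the citation ``See, for example, Chap.~7 in [Gilmore]'': that reference establishes the identity exactly as you do, by explicit computation in the faithful $2\times 2$ fundamental representation (where $M^2=-(\xi/2)^2 I$ collapses the exponential) followed by the observation that the resulting group-multiplication identity is representation-independent. The lifting subtlety you rightly flag is benign in the paper's application, because the Schwinger realization $J_+=ab^\dagger$, $J_-=a^\dagger b$, $J_3=\frac{1}{2}(b^\dagger b-a^\dagger a)$ conserves total photon number, so the bosonic Hilbert space decomposes into a direct sum of finite-dimensional $SU(2)$ irreps on which the identity---with your branch choice $\Lambda_3^{-1/2}=\cos(\xi/2)-i(\lambda_3/\xi)\sin(\xi/2)$, and extended to complex $\lambda_\pm,\lambda_3$ by analyticity block by block---holds directly, while your Wei--Norman-style ODE argument in the scale parameter $t$ is an equally valid, representation-free alternative.
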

\begin{proof}
See, for example, Chap.~7 in \cite{gilmore}.
\end{proof}
For the case of interest here,
\begin{eqnarray}
\lambda_3 &= \nu-\mu,
\qquad
\lambda_+ = -i\gamma,
\qquad
\lambda_- = i\gamma,
\\
\xi &= \sqrt{(\nu-\mu)^2+4\gamma^2}= \phi+\theta,
\\
\Lambda_3 &= \Bk{\cos \frac{\phi+\theta}{2}
- i(1-2\eta)\sin \frac{\phi+\theta}{2}}^{-2}.
\end{eqnarray}
The disentangling theorem is useful because $\exp(i\Lambda_-J_-)\ket{0}_B= \ket{0}_B$
and ${_B}\bra{0}\exp(i\Lambda_+J_+) = {_B}\bra{0}$:
\begin{eqnarray}
{_B}\bra{0}\exp(ig)\ket{0}_B
&= {_B}\bra{0}e^{i(\mu+\nu)J}
e^{i\Lambda_+ J_+}\Lambda_3^{J_3} e^{i\Lambda_- J_-}\ket{0}_B
\\
&= e^{i(\mu+\nu)a^\dagger a/2} \Lambda_3^{-a^\dagger a/2}
\\
&= z^{a^\dagger a},
\end{eqnarray}
where
\begin{eqnarray}
z \equiv \eta e^{i\phi} + (1-\eta)e^{-i\theta}. 
\end{eqnarray}

As an example, consider the fidelity for a coherent state
$\ket{\alpha}$:
\begin{eqnarray}
F = \abs{\bra{\alpha}z^{a^\dagger a}\ket{\alpha}}^2= 
\abs{\sum_{n=0}^\infty C_n z^n}^2,
\end{eqnarray}
where $C_n$ is the Poisson distribution with mean
$|\alpha|^2$. $\sum_n C_n z^n$ is known as the $z$-transform in
engineering and the generating function in statistics \cite{gardiner}.
It becomes the Fourier transform, also known as the characteristic
function in statistics, when $\eta = 1$. For the Poisson distribution,
\begin{eqnarray}
F &= \abs{\exp\Bk{|\alpha|^2(z-1)}}^2 
\\
&= \exp\BK{2|\alpha|^2\Bk{\eta\cos\phi+(1-\eta)\cos\theta-1}}.
\end{eqnarray}
To maximize $F$ and obtain the tightest lower bounds, one should choose $\cos\theta = 1$,
leading to (\ref{Fcoh}).

Generalization to the multimode case is straightforward. For
continuous optical fields, $a(t)$ can be first discretized in time as
$a(t_j) \approx \sqrt{\delta t} a_j$ with $[a_j,a_k^\dagger] =
\delta_{jk}$ before applying the multimode result and taking the
continuous limit. For example, a multimode coherent state with mean
photon flux $\avg{I(t)}$ can be written as a tensor product of
coherent states, each with a duration of $\delta t$ and mean number
$|\alpha_j|^2 = \avg{I(t_j)}\delta t$.  The collective fidelity is
then
\begin{eqnarray}
F &= \prod_{j}\exp\Bk{-4\eta \avg{I(t_j)}\delta t\sin^2\frac{\phi_j}{2}}
\\
&\to \exp\Bk{-4\eta\int dt\avg{I(t)}\sin^2\frac{\phi(t)}{2}},
\end{eqnarray}
which is (\ref{Fcoh2}).

\section{\label{qfi}Quantum Fisher information matrix}
Consider the multimode case with annihilation operators $a_j$ and
$b_j$. Let
\begin{eqnarray}
U_A = \exp\bk{i\sum_j \phi_j a_j^\dagger a_j},
\\
U_{AB} = \exp\Bk{i\kappa \sum_j \bk{a_j^\dagger b_j + a_j b_j^\dagger}},
\\
U_B = \exp\bk{i\sum_j \theta_j b_j^\dagger b_j},
\\
\theta_j = \sum_k \lambda_{jk}\phi_k.
\end{eqnarray}
The QFI matrix $\mathcal J^{(Q)}(\phi)$ can be computed by considering
the fidelity for small $\phi_j$ and $F\approx 1$ \cite{hayashi,paris}:
\begin{eqnarray}
F = \abs{\bra{\Psi}U_{AB}^\dagger  U_B^\dagger U_A U_{AB}\ket{\Psi}}^2
= 1-\frac{1}{4} \sum_{j,k} \mathcal J_{jk}^{(Q)}(\phi)\phi_j \phi_k + O(||\phi||^4).
\label{Fexpand}
\end{eqnarray}
This also shows why $F$ is more difficult to calculate than $\mathcal
J^{(Q)}(\phi)$ in general, as $\mathcal J^{(Q)}(\phi)$ is just a second-order term in $F$.
Write the fidelity as
\begin{eqnarray}
F &\equiv \abs{\bra{\Psi}\exp\bk{i\sum_j g_j}\ket{\Psi}}^2,
\\
g_j &\equiv \mu_j a_j^\dagger a_j + \nu_j b_j^\dagger b_j 
+i\gamma_j(a_j^\dagger b_j - a_j b_j^\dagger),
\\
\mu_j &= \phi_j\cos^2\kappa-\theta_j\sin^2\kappa,
\\
\nu_j &= \phi_j\sin^2\kappa - \theta_j\cos^2\kappa,
\\
\gamma_j &= (\phi_j+\theta_j)\sin\kappa\cos\kappa.
\end{eqnarray}
Since $g$ is a linear function of $\phi$, one can first expand $F$ in the
leading order of $g$:
\begin{eqnarray}
F &\approx \abs{\bra{\Psi} 1 + i\sum_j g_j-\frac{1}{2}\sum_{j,k}g_j g_k\ket{\Psi}}^2
\\
&= \bk{1-\frac{1}{2}\sum_{j,k}\bra\Psi g_j g_k\ket\Psi}^2 + \bk{\sum_j \bra\Psi g_j\ket\Psi}^2
\\
&\approx
1-\sum_{j,k}\bra\Psi g_j g_k\ket\Psi + \sum_{j,k} \bra\Psi g_j\ket\Psi\bra\Psi g_k\ket\Psi
\\
&=1-\sum_{j,k}\bra\Psi\Delta g_j\Delta g_k\ket\Psi,
\qquad
\Delta g_j \equiv g_j - \bra{\Psi}g_j\ket{\Psi},
\label{gvar}
\end{eqnarray}
and obtain $\mathcal J^{(Q)}(\phi)$ by computing $\bra\Psi\Delta
g_j\Delta g_k\ket\Psi$ and comparing (\ref{Fexpand}) and
(\ref{gvar}). After some algebra,
\begin{eqnarray}
\frac{\mathcal J_{jk}^{(Q)}(\phi)}{4} &= \sum_{l,m}(\delta_{lj}\cos^2\kappa-
\lambda_{lj}\sin^2\kappa) 
(\delta_{mk}\cos^2\kappa-\lambda_{mk}\sin^2\kappa) 
\Avg{\Delta n_l\Delta n_m}
\nonumber\\&\quad
+ \sum_{l,m}\sin^2\kappa\cos^2\kappa
(\delta_{lj}+\lambda_{lj})(\delta_{mk}+\lambda_{mk})
 \Avg{n_l}\delta_{lm},
\end{eqnarray}
where $n_j \equiv a_j^\dagger a_j$. Since $\mathcal J^{(Q)}(\phi)$
does not depend on $\phi$, the Bayes QFI $J^{(Q)}$ is equal to
$\mathcal J^{(Q)}$. In the continuous-time limit with $t_j =
t_0+j\delta t$, $\delta t\to 0$,
\begin{eqnarray}
\frac{1}{\sqrt{\delta t}} a_j \to a(t_j),
\qquad
\frac{1}{\delta t} n_j \to I(t_j),
\qquad
\frac{\delta_{jk}}{\delta t} \to \delta(t_j-t_k),
\nonumber\\
\frac{\lambda_{jk}}{\delta t} \to \lambda(t_j-t_k),
\qquad
\frac{J_{jk}^{(Q)}}{\delta t^2} \to J^{(Q)}(t_j,t_k),
\end{eqnarray}
and (\ref{JQphi}) in the main text is obtained.

\section*{References}
\bibliographystyle{hunsrt}
\bibliography{noisy_metrology3_njp2}

\end{document}